\def\Im{\mathrm {Im}\,}
\def\C{\mathbb{C}}
\def\R{\mathbb{R}}
\def\1{\mathbf{1}}
\def\d{\mathrm d}
\def\e{\mathrm{e}}
\def\i{\mathrm{i}}
\def\pa{\partial}
\def\Ai{\mathrm{Ai}}
\def\s{\sigma}
\def\be{\begin{equation}}
	\def\ee{\end{equation}}
\newtheorem{theorem}{Theorem}[section]
\newtheorem{remark}[theorem]{Remark}
\newtheorem{proposition}[theorem]{Proposition} 
\newtheorem{corollary}[theorem]{Corollary}
\tikzset{->-/.style={decoration={
			markings,
			mark=at position #1 with {\arrow{latex}}},postaction={decorate}}}
\tikzset{-<-/.style={decoration={
			markings,
			mark=at position #1 with {\arrowreversed{latex}}},postaction={decorate}}}
\tikzset{cross/.style={cross out, draw, 
		minimum size=2*(#1-\pgflinewidth), 
		inner sep=0pt, outer sep=0pt}}
\begin{document}
	
	\numberwithin{equation}{section}
	
	\title{On the integrable structure of deformed sine kernel determinants}
	\author{Tom Claeys$^1$ and Sofia Tarricone$^2$}
	\date{}
	\maketitle
	
	\begin{center}
		{$^1$\footnotesize
			{\textit{Institut de Recherche en Math\'ematique et Physique, Universit\'e catholique de Louvain,
					\\
					Chemin du Cyclotron 2, 1348 Louvain-la-Neuve, Belgium}}
				\\
				\texttt{tom.claeys@uclouvain.be}
			\\
			$^2$\footnotesize
			{\textit{Institut de Physique Th\'eorique, Universit\'e Paris-Saclay, CEA, CNRS,
					\\F-91191 Gif-sur-Yvette, France}
				\\
				\texttt{sofia.tarricone@ipht.fr}}
		}
	\end{center}

\medskip
	
	\begin{abstract} We study a family of Fredholm determinants associated to deformations of the sine kernel, parametrized by a weight function $w$. For a specific choice of $w$, this kernel describes bulk statistics of finite temperature free fermions. We establish a connection between these determinants and a system of integro-differential equations generalizing the fifth Painlev\'e equation, and we show that they allow us to solve an integrable PDE explicitly for a large class of initial data.
	\end{abstract}




	
	\section{Introduction}
\paragraph{Context.}	
On microscopic scales, eigenvalues of large classes of random matrices are distributed according to a limited number of universal determinantal point processes. The most classical examples of such point processes are the sine process, which arises in bulk scaling limits, and the Airy point process, which corresponds to (soft) edge scaling limits.
They are determinantal point processes on the real line with correlation kernels
\[\mathrm K^{\sin}(x,y)=\frac{\sin(\pi(x-y))}{\pi(x-y)},\qquad
\mathrm K^{\Ai}(x,y)=\int_{0}^{+\infty}\Ai(x+u)\Ai(y+u)\d u,\]
where $\Ai$ is the Airy function.
In recent years, particular deformations of the sine and Airy kernels, along with the associated deformed determinantal point processes, have attracted interest because of their connection with models of free fermions at finite temperature \cite{DeanLedoussalMajumdarSchehr,  LeDoussalMajumdarRossoSchehr} and with the Moshe-Neuberger-Shapiro model \cite{Johansson, LiechtyWang, MNS}. 
A Fredholm determinant of this finite-temperature deformation of the Airy kernel moreover characterizes the narrow wedge solution of the Kardar-Parisi-Zhang equation \cite{AmirCorwinQuastel}, and has been studied intensively because of this connection. More recently, another deformation of the Airy and sine kernel determinants appeared in the study of limiting behavior for edge and bulk  spacing distributions of complex elliptic Ginibre matrices in weak non-hermiticity limits \cite{BothnerLittle, BothnerLittlebulk}.

In this work, our aim is to study a large class of Fredholm determinants of finite-temperature type deformations of the sine kernel. We will show that, just like in the Airy case, these determinants have a rich integrable structure: they are intimately connected to an integro-differential Painlev\'e equation, and they are the key objects for the direct and inverse scattering transform of a Zakharov-Shabat system, which allows us to solve an integrable PDE introduced by Its, Izergin, Korepin, and Slavnov \cite{IIKS} in a remarkably simple and explicit manner in terms of its initial data.

\paragraph{The sine kernel determinant.}	

Let $\mathcal K^{\sin}$ be the integral operator  with kernel $\mathrm K^{\sin}$,
\[\mathcal K^{\sin} f(x)=\int_{-\infty}^{+\infty}\mathrm K^{\sin}(x,y)f(y)\d y.\]
We define the sine kernel Fredholm determinant as follows,
\be
F(s;\ell):=\det\left(1-\ell \mathcal K^{\sin}|_{\left[-\frac{s}{2\pi},\frac{s}{2\pi}\right]}\right):=1+\sum_{n=1}^\infty\frac{(-\ell)^n}{n!}
\int_{\left[-\frac{s}{2\pi},\frac{s}{2\pi}\right]^n}\det\left(\mathrm K^{\sin}(x_j,x_k)\right)_{j,k=1}^n\prod_{j=1}^n\d x_j,
\ee
for $s>0,\ell\in\mathbb C$,
where we recognize the series at the right as the standard Fredholm series. For $\ell=1$, this quantity $F(s;1)$ is equal to the probability that a random point configuration in the sine process contains no points in the interval $\left[-\frac{s}{2\pi},\frac{s}{2\pi}\right]$; for $0<\ell<1$, $F(s;\ell)$ is the probability that a random point configuration in the thinned sine process, obtained by removing each point in a configuration independently with probability $1-\ell$, contains no points in the interval $\left[-\frac{s}{2\pi},\frac{s}{2\pi}\right]$.

Jimbo, Miwa, Mori, and Sato \cite{JMMS} first and Tracy and Widom \cite{TW94} later proved that $F(s;\ell)$ is expressible in terms of a Painlev\'e transcendent: they proved 
	\be 
	\label{eq:jmms formula sine}
	F(s;\ell)= \exp\left(\int_0^{s} \frac{\nu(x;\ell)}{x}\d x\right),
	\ee
		where $\nu\coloneqq\nu(x;\ell)$ solves (a special case of) the $\sigma$-form of the fifth Painlev\'e equation, namely the second order ODE (here $'$ denotes the $x$-derivative)
	\be 
	\label{eq:PVsigmaform}
	(x\nu'')^2+4(x\nu'-\nu)(x\nu'-\nu + (\nu')^2)=0,
	\ee
	together with the boundary condition 
	\be
	\nu(x;\ell)=-\frac{\ell}{\pi}x+O(x^2), \qquad x\rightarrow 0.
	\ee
	For later convenience, we remark that this implies in particular that
		\be
		\label{eq:nu and fred det zero t}
		\pa_s \log F(s;\ell) =  \frac{\nu(s;\ell)}{s},
		\ee
		and so
		\be
		\label{eq: second log der Q zero t}
		\pa_s s \pa_s \log F(s;\ell) = \nu'(s;\ell).  
		\ee

\paragraph{Deformed sine kernel determinants.}
Observe that we can express the sine kernel as
\be
	\mathrm K^{\sin} (x,y) = \int_{0}^{1} \cos \pi (x-y) t \d t = \int_{-1/2}^{1/2} \e^{ 2\pi \i(x-y) u} \d u.
	\ee
Using this last expression, we can factorize the associated integral operator acting on $L^2(\mathbb R)$ as follows,
	\be
	\label{eq:sinekernel decomp0}
	\mathcal{K}^{\sin} = \mathcal{F}^* \chi_{(-1/2,1/2)}\mathcal{F}=\mathcal{F} \chi_{(-1/2,1/2)}\mathcal{F}^*,
	\ee
	where $\chi_I$ denotes the projection operator on the set $I$, and $\mathcal{F}$ and $\mathcal{F}^*$ denote the Fourier transform and its inverse acting on $L^2(\mathbb R)$, 
	\be
	\mathcal{F}f(x)= \int_\R f(t)\e^{-2\pi\i xt}\d t, \;\;\text{and}\;\; \mathcal{F}^*g(y)= \int_\R g(u)\e^{2\pi\i yu}\d u.
	\ee
We can deform the sine kernel by replacing the projection operator in \eqref{eq:sinekernel decomp0} by a multiplication operator, 	\be
	\label{eq:sinekernel decomp deformed}
	\mathcal{K}_w^{\sin} = \mathcal{F}^* \mathcal M_w \mathcal{F},\qquad \mathcal M_w f(u)=w(u)f(u),
	\ee
	for any integrable function $w:\mathbb R\to \mathbb C$, such that the associated kernel is given by
\be\label{eq:deformedsinekernel} \mathrm K_w^{\sin}(x,y)=\int_{-\infty}^{\infty} \e^{ 2\pi \i(x-y) u} w(u) \d u.
\ee
We consider the Fredholm determinant
\be
\label{eq:def Fw}
F_w(s):=\det\left(1-\mathcal K_{w}^{\sin}|_{\left[-\frac{s}{2\pi},\frac{s}{2\pi}\right]}\right).
\ee
As we will show later, $\mathcal K_{w}^{\sin}|_{\left[-\frac{s}{2\pi},\frac{s}{2\pi}\right]}$ is trace-class (see Proposition \ref{prop:traceclass}), such that the Fredholm determinant is well-defined, and we have the alternative Fredholm determinant identity
\be
\label{eq:def Fw-alt}
F_w(s):=\det\left(1-\sqrt{w_s}\mathcal K^{\sin}\sqrt{w_s}\right),
\ee
where $\mathcal K^{\sin}$ is the non-deformed sine kernel operator, and $\sqrt{w_s}$ denotes the multiplication operator with a square root (note that the operator $\sqrt{w_s}\mathcal K^{\sin}\sqrt{w_s}$ is independent of the choice of square root) of the function $w_s(r):=w(\frac{\pi r}{s})$ (see Proposition \ref{prop: fred det equalities}).
The identity \eqref{eq:def Fw-alt} has important advantages, since it allows us to characterize $F_w(s)$ in terms of a $2\times 2$ matrix Riemann-Hilbert (RH) problem (see Section \ref{section:RH}).

If $w$ moreover takes values in $[0,1]$, the operator $\mathcal K_w^{\sin}$ is Hermitian and we will show (see also Proposition \ref{prop:traceclass}) that $0\leq \mathcal K_w^{\sin}< 1$. It then follows from \cite{Soshnikov} that $\mathcal K_w^{\sin}$ defines a unique determinantal point process, which we call the ($w$-)deformed sine process.
The Fredholm determinant \eqref{eq:def Fw} is then equal to the probability that a random configuration in this deformed sine process has no points in $\left[-\frac{s}{2\pi},\frac{s}{2\pi}\right]$.

For general integrable $w:\mathbb R\to\mathbb C$, the second Fredholm determinant identity \eqref{eq:def Fw-alt} also has a natural point process interpretation: it is the average multiplicative statistic $\mathbb E\prod_{j=1}^\infty (1-w_s(x_j))$, where the average is with respect to the (non-deformed) sine process, and $x_1,x_2,\ldots$ denote the random points in a configuration.
\begin{remark}\label{remark:FT}
The finite temperature sine kernel from \cite{DeanLedoussalMajumdarSchehr, Johansson, LiechtyWang} is given by
\be 
{K}_\alpha^{\rm FTsin}(x,y) = \int_{0}^{+\infty}\frac{\cos \pi (x-y)u}{\alpha\e^{u^2}+1}\d u
=\int_{-\infty}^{+\infty}\frac{\e^{2\pi\i(x-y)u}}{\alpha\e^{4u^2}+1}\d u
,
\ee
where $\alpha \in \left(0,1\right)$, and $-\frac{1}{\log\alpha}$ can be interpreted physically, in a free fermion model, as a parameter proportional to the temperature of the system. 
This is precisely our deformed sine kernel $\mathrm K_{w}^{\sin}$ in \eqref{eq:deformedsinekernel} for the special choice of $w$,
\be
w(u)=\frac{1}{\alpha\e^{4u^2}+1}.
\ee
Observe that this is an even function taking values in $(0,1)$.
\end{remark}
\begin{remark} Recently, \cite{GLDS1, GLDS2} studied other models of free fermions, different from the ones in e.g.\ \cite{DeanLedoussalMajumdarSchehr}. These models are concerned with potentials with delta impurity. Already at zero temperature, certain limiting behaviors of the fermions in the bulk are again described by a deformed sine kernel of the type \eqref{eq:deformedsinekernel} but with $w(u)=h(u)\chi_I(u)$, $h$ being a smooth function (depending on the potential) and $I$ a compact real interval. As we will see, this choice of $w$ does not actually belong to the class for which our results hold, and it would be interesting to see how our results generalize to such situations.
\end{remark}

\paragraph{Objectives.}
As already mentioned, the deformed sine kernel determinants introduced above are natural counterparts of deformed Airy kernel determinants which have been investigated intensively in recent years. 
Define the $\sigma$-deformed Airy kernel as \[\mathrm K^{\Ai}_\sigma(x,y)=\int_{\mathbb R}\Ai(x+t)\Ai(y+t)\sigma(t)\d t.\]
For a suitable class of functions $\sigma$, it was shown first in \cite{AmirCorwinQuastel} (see also \cite{Bothner, BothnerCafassoTarricone, CafassoClaeysRuzza, Krajenbrink}) that
\begin{equation}\label{eq:airy fred det}
	\partial_s^2\log\det\left(1-\mathcal K^{\Ai}_\sigma|_{(s,+\infty)}\right)=-\int_{\mathbb R}\varphi_\sigma(\lambda;s)^2\sigma'(\lambda)\d \lambda,
\end{equation}
where $\varphi_\s$ solves the integro-differential Painlev\'e II equation 
\be
\label{IDPII}
\pa_s^2\varphi_\s(\lambda;s)=\biggl(\lambda+s+2\int_\R\varphi_\s(\mu;s)^2\s'(\mu)\d\mu\biggr)\varphi_\s(\lambda;s).
\ee
If $\sigma=\chi_{(0,+\infty)}$, the kernel $\mathrm K^{\Ai}_\sigma$ is the classical Airy kernel, and in that case the integro-differential Painlev\'e II equation degenerates to the Painlev\'e II equation $q''(s)=sq(s)+2q(s)^3$ for $q(s)=\varphi_\sigma(0;s)$, thus recovering the classical Tracy-Widom formula \cite{TracyWidomAiry}.

 Moreover, for a one-parameter family of functions $\sigma=\sigma_t$ of the form $\sigma_t(u)=\sigma(t^{-2/3}u)$,
the quantity $\partial_x^2\log\det\left(1-\mathcal K^{\Ai}_{\sigma_t}|_{(-xt^{-1/3},+\infty)}\right)+\frac{x}{2t}$ solves the KdV equation, see \cite[Theorem 1.3]{CafassoClaeysRuzza}.
 
\medskip

Our objective in this work is to describe similar connections between the deformed sine kernel determinants, an integro-differential Painlev\'e equation, and an integrable PDE.
To avoid technical complications, we will restrict ourselves to Schwartz functions $w:\mathbb R\to\mathbb C$, even though we believe most of the results hold under weaker smoothness and decay assumptions on $w$.
First, we will reveal a simple connection between $F_w(s)$ and the Zakharov-Shabat system, which we can re-write here as a system of integro-differential equations. For $w$ even, it reduces to a single integro-differential equation. We will also explain how we can see this integro-differential equation as a generalization of the fifth Painlev\'e equation. 
Secondly, we will consider a one-parameter family of even functions $w$, namely we will take $w(u)$ of the form $W(u^2-y)$, $y\in\mathbb R$, for some function $W$. Then, the Fredholm determinants $F_{W(.^2-y)}(s)$ are connected to an integrable PDE introduced in \cite{IIKS}. As we will see, they drive the direct and inverse scattering transform which allows to solve this integrable PDE explicitly in terms of its initial data. It is striking that the map transforming initial data into scattering data and vice versa consists only of applying simple integral transforms.

\section{Statement of results}
\paragraph{Connection with the Zakharov-Shabat system.}
Our first result connects the determinant $F_w(s)$ to a system of integro-differential equations, arising from the Zakharov-Shabat system, in general, and to an integro-differential generalization of the Painlev\'e V equation, in the case where $w$ is symmetric. In particular, we are going to prove the following characterization. 
\begin{theorem}\label{thm:s}
Let $w:\mathbb R\to\mathbb C$ be a Schwartz function.
\begin{itemize}
\item[(i)]For every $s>0$ such that $F_w(s)\neq 0$, we have the identity 
	\be
\pa_s s \pa_s \log F_w(s) = \frac{1}{\pi}\int_{\R} \lambda w'(\lambda)\phi(\lambda;s)\psi(\lambda;s)\d \lambda,
\ee
	where $\phi,\psi$ solve the system of equations \begin{align}
	& \pa_s\phi(\lambda;s) = \i \lambda \phi(\lambda,s) -\frac{1}{2\pi\i s} \int_{\R}\phi^2(\mu;s)w'(\mu)\d \mu\ \psi(\lambda;s),\l\label{eq:system phi psi 1}\\
	& \pa_s\psi(\lambda;s) = \frac{1}{2\pi\i s} \int_{\R}\psi^2(\mu;s)w'(\mu)\d \mu\ \phi(\lambda;s) -\i\lambda \psi(\lambda;s).\label{eq:system phi psi 2}
\end{align}
with  {$\lambda\to \pm\infty$ asymptotics} \be{\phi(\lambda;s)\sim \e^{\i s\lambda},\qquad \psi(\lambda;s)\sim \e^{-\i s\lambda}}.
\ee
Moreover, if $w:\mathbb R\to[0,1]$ takes values in $[0,1]$, $F_w(s)\neq 0$ for all $s>0$.
\item[(ii)] If $w$ is even, $w(\lambda)=w(-\lambda)$ for all $u\in\mathbb R$, we have
	\be\label{eq:JMMSdeformed}
\pa_s s\pa_s\log F_w(s) = \frac{1}{\pi} \int_\R \lambda w'(\lambda) \phi (\lambda;s)\phi(-\lambda;s) \d \lambda,
\ee
where $\phi$ solves the integro-differential equation  
\be\label{eq:IDPV}
\pa_s \phi(\lambda;s) = \i \lambda \phi(\lambda;s) -\frac{1}{2\pi \i s} \int_{\R}\phi^2(\mu;s)w'(\mu)\d\mu\ \phi(-\lambda;s),
\ee
with {$\lambda\to \pm\infty$ asymptotics $\phi(\lambda;s)\sim \e^{\i s\lambda}$.}
\end{itemize}
\end{theorem}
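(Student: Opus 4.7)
The plan is to exploit the Riemann-Hilbert (RH) characterization of $F_w(s)$ promised in Section~\ref{section:RH}, viewing $F_w(s)$ through the representation \eqref{eq:def Fw-alt} as the Fredholm determinant of an integrable operator in the sense of Its--Izergin--Korepin--Slavnov (IIKS). Associated with such an operator is a $2\times 2$ RH problem $\Psi(\lambda;s)$ on $\mathbb R$, whose jump matrix is built from $w_s$ and whose normalization is $\Psi(\lambda;s)\to I$ as $\lambda\to\infty$. I would first write down the standard IIKS identity expressing $\partial_s\log F_w(s)$ in terms of the resolvent kernel, which in turn is a bilinear expression in the entries of $\Psi$.

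I would then define $\phi(\lambda;s)$ and $\psi(\lambda;s)$ as two specific entries (or boundary values) of $\Psi$ multiplied by the oscillatory dressing factors $\e^{\pm\i s\lambda}$, so that the normalization $\Psi\to I$ at infinity automatically produces the asymptotics $\phi\sim \e^{\i s\lambda}$ and $\psi\sim \e^{-\i s\lambda}$. Differentiating the RH problem in $s$ is a routine Jimbo--Miwa--Ueno exercise: since $s$ enters via conjugation by $\e^{\i s\lambda\sigma_3}$, the matrix $\partial_s\Psi\cdot\Psi^{-1}$ has no jumps and, combined with large-$\lambda$ control, is polynomial of degree one in $\lambda$ with diagonal part $\i\lambda\sigma_3$. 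The off-diagonal coefficients are read off from the subleading term $\Psi_1(s)$ in $\Psi\sim I+\Psi_1(s)/\lambda+\cdots$. Expressing this coefficient as a contour integral of the jump data and performing an integration by parts in $\lambda$ to convert $w$ into $w'$, I would match the two off-diagonal entries with $\mp\frac{1}{2\pi\i s}\int_\R\phi^2(\mu;s)w'(\mu)\d\mu$ and $\pm\frac{1}{2\pi\i s}\int_\R\psi^2(\mu;s)w'(\mu)\d\mu$, producing the system \eqref{eq:system phi psi 1}--\eqref{eq:system phi psi 2}. The non-vanishing of $F_w(s)$ for $w\colon\mathbb R\to[0,1]$ follows from the strict bound $0\le\mathcal K_w^{\sin}<1$, to be established in Proposition~\ref{prop:traceclass}.

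For the log-derivative identity I would compute $\partial_s(s\partial_s\log F_w(s))$ by combining the IIKS trace formula for $\partial_s\log F_w(s)$ with the Lax equation just derived; after cancellations the only surviving term is the bilinear integral $\frac{1}{\pi}\int_\R\lambda w'(\lambda)\phi(\lambda;s)\psi(\lambda;s)\d\lambda$, giving part~(i). For part~(ii) I would use the symmetry $w(\lambda)=w(-\lambda)$, which induces an invariance of the RH data under $\lambda\mapsto -\lambda$, to deduce the identification $\psi(\lambda;s)=\phi(-\lambda;s)$. Substituting this relation and using that $w'$ is odd (so that $\int\psi^2(\mu;s)w'(\mu)\d\mu=-\int\phi^2(\mu;s)w'(\mu)\d\mu$) collapses \eqref{eq:system phi psi 1}--\eqref{eq:system phi psi 2} to the single integro-differential equation \eqref{eq:IDPV} and reduces the bilinear integral to the form \eqref{eq:JMMSdeformed}.

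The main obstacle is the careful extraction of the off-diagonal entries of $\partial_s\Psi\cdot\Psi^{-1}$ from the RH data, and in particular verifying that the residue that appears is a genuine square $\phi^2$ (resp.\ $\psi^2$) rather than a mixed bilinear quantity. The $w'$ in \eqref{eq:system phi psi 1}--\eqref{eq:system phi psi 2} must be produced by the integration by parts that absorbs the extra factor of $\lambda$ coming from expanding $\Psi_1(s)$, and tracking all signs together with the $(2\pi\i s)^{-1}$ prefactor is where the bookkeeping is most delicate; the Schwartz hypothesis on $w$ is used to justify the boundary terms in that integration by parts and the interchange of $\partial_s$ with the Fredholm series.
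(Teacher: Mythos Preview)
Your overall strategy coincides with the paper's: define $\phi,\psi$ as entries of the RH solution $U$ (so that the asymptotics $\phi\sim\e^{\i s\lambda}$, $\psi\sim\e^{-\i s\lambda}$ come for free), derive the Lax equation $\partial_sU=MU$ with $M=\i\lambda\sigma_3+(\text{off-diag})$ from the $s$-independence of the jump, identify $\psi(\lambda)=\phi(-\lambda)$ in the even case from the RH symmetry $\sigma_1U(-\lambda)\sigma_3=U(\lambda)$, and obtain the second log-derivative formula by combining the IIKS resolvent identity for $\partial_s\log F_w$ with the Lax equation. All of this matches Section~\ref{section:s}.

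The one place where your sketch diverges --- and where it is not clear it would go through as written --- is the derivation of the trace formulas $\beta(s)=-\frac{1}{2\pi s}\int\phi^2 w'\,\d\mu$ and $\gamma(s)=-\frac{1}{2\pi s}\int\psi^2 w'\,\d\mu$. You propose to read off $\Psi_1$ from its contour-integral representation and then integrate by parts in $\lambda$ to produce the $w'$, with the IBP also ``absorbing an extra factor of $\lambda$''. But there is no extra $\lambda$ in the off-diagonal entries of $M$ (they are constant in $\lambda$), and an integration by parts on the Cauchy-integral formula for $\Psi_1$ would hit the unknown boundary value $\Psi_-$ as well as $w$, so it does not cleanly isolate $w'$. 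The paper instead obtains these formulas by a residue argument: one computes the additive jump of $(\partial_\lambda U)U^{-1}$ across $\mathbb R$, which is $U_-\begin{pmatrix}0&-w'(\lambda)\\0&0\end{pmatrix}U_-^{-1}$ and therefore already contains $w'$ (from differentiating the jump matrix), with entries $-\phi\psi w'$, $\phi^2 w'$, $-\psi^2 w'$. Integrating this jump over $\mathbb R$ equals $2\pi\i$ times the formal residue of $(\partial_\lambda U)U^{-1}$ at infinity, which from the asymptotics \eqref{eq:asy U} is $-\i s[U_1,\sigma_3]$; comparing entries gives the trace formulas together with the orthogonality $\int\phi\psi\,w'\,\d\lambda=0$. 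This is both where the $w'$ actually originates and where the factor $1/s$ comes from. Once you replace your IBP step by this residue computation, the rest of your outline goes through exactly as in the paper.
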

\begin{remark}
The system \eqref{eq:system phi psi 1}--\eqref{eq:system phi psi 2} is nothing else than the Zakharov-Shabat system \cite{ZakharovShabat}
\begin{align*}
	& \pa_s\phi(\lambda;s) = \i\lambda \phi(\lambda;s) -\i\beta(s) \psi(\lambda;s),\l\\
	& \pa_s\psi(\lambda;s) = \i\gamma(s) \phi(\lambda;s) -\i\lambda \psi(\lambda;s),
\end{align*}
along with the integral identities
\[\beta(s)=\frac{-1}{2\pi s} \int_{\R}\phi^2(\mu;s)w'(\mu)\d \mu,\qquad \gamma(s)=\frac{-1}{2\pi s} \int_{\R}\psi^2(\mu;s)w'(\mu)\d \mu,\]
for the potentials $\beta$ and $\gamma$. 
Such integral expressions are, in the literature of integrable differential equations, often referred to as trace formulas, see e.g.\ \cite{DeiftTrubowitz}.
\end{remark}

\paragraph{Back to Painlev\'e V.} Since $w$ is assumed to be smooth, the choice $w(\lambda)=\chi_{\left(-\frac{1}{2},\frac{1}{2}\right)}(\lambda)$ is not admissible. However, we can approximate $\chi_{\left(-\frac{1}{2},\frac{1}{2}\right)}(\lambda)$ by a sequence of smooth functions $w$, and then in the limit, it is natural to interpret $w'(\lambda) = \delta_{-\frac{1}{2}}(\lambda)- \delta_{\frac{1}{2}}(\lambda)$ as a sum of $\delta$-functions.
By doing this, the results of Theorem \ref{thm:s} (ii) degenerate to the Painlev\'e V expression \eqref{eq:jmms formula sine}, as we explain next.

In this degenerate case, the integro-differential equation \eqref{eq:IDPV} reduces to the system
\be
\pa_s \phi\left(\frac{1}{2};s\right) = \frac{\i}{2}\phi\left(\frac{1}{2};s\right) -\frac{1}{2\pi\i s} \left(\phi^2\left(-\frac{1}{2};s\right)-\phi^2\left(\frac{1}{2};s\right)\right)\phi\left(-\frac{1}{2};s\right)
\ee
and 
\be
\pa_s \phi\left(-\frac{1}{2};s\right) =  -\frac{\i}{2}\phi\left(-\frac{1}{2};s\right) -\frac{1}{2\pi\i s} \left(\phi^2\left(-\frac{1}{2};s\right)-\phi^2\left(\frac{1}{2};s\right)\right)\phi\left(\frac{1}{2};s\right).
\ee
Hence, defining  $v(x), u(x)$ by  
\be 
\label{eq:def vu}
v(x)=\frac{1}{2\pi\i}\phi\left(\frac{1}{2};\frac{x}{2\i}\right)\phi\left(-\frac{1}{2};\frac{x}{2\i}\right),\qquad u(x) =\frac{\phi^2\left(\frac{1}{2};\frac{x}{2\i}\right)}{\phi^2\left(-\frac{1}{2};\frac{x}{2\i}\right)},
\ee
we see after a straightforward computation that $v,u$ solve the coupled system of differential equations
\begin{equation}
	\label{eq:sys u v}
	xv'=v^2(u-\frac{1}{u}),\quad xu'=xu-2v(u-1)^2.
\end{equation}
This system is well-known to be related to the Painlev\'e V equation, see  \cite{FIKN} or \cite[Equations (4.106-7) with $\alpha=0=\beta$]{CIK}: it implies that $u(x)$ solves the Painlevé V equation (again, see \cite{FIKN} or \cite[Equation (1.22)  with $A=B=0, C=1, D=-1/2$]{CIK}
\be 
\label{eq:PVu}
u''=\frac{u}{x}-\frac{u'}{x}- \frac{u(u+1)}{2(u-1)}+(u')^2\frac{3u-1}{2u(u-1)},
\ee
and that
\[
\sigma(x)=\int_0^{x}\sigma'(\xi)\d\xi\quad\mbox{with}\quad
\sigma'(x):=-\frac{1}{2\pi\i}\phi\left(\frac{1}{2};\frac{x}{2\i}\right)\phi\left(-\frac{1}{2};\frac{x}{2\i}\right),\]
solves the $\sigma$-form of the Painlev\'e V equation
\be\label{eq:PVsigma}
(x\sigma'')^2=(\sigma-x\sigma'+2(\sigma')^2)^2-4(\sigma')^4.
\ee
Defining $\nu(s)=\sigma(2\i s)$, such that
\[\nu'(s)=-\frac{1}{\pi}\phi\left(\frac{1}{2};s\right)\phi\left(-\frac{1}{2};s\right),\]
it is now straightforward to check that $\nu(s)$ solves \eqref{eq:PVsigmaform} and
the above expression in the right hand side is indeed the degenerate case of \eqref{eq:JMMSdeformed} as it should be and it corresponds to equation \eqref{eq: second log der Q zero t} ($\ell=1$).
\begin{remark}
Another integro-differential generalization of the Painlev\'e V equation related to $F_w(s)$ appeared in \cite{BothnerLittlebulk}. For smooth, even and exponentially fast decaying to zero at $\pm\infty$ functions $w$, the authors there established an expression for the Fredholm determinant $F_w(s)$ in terms of a solution of an integro-differential equation of Painlevé type \cite[Theorem 6.7]{BothnerLittlebulk}. More precisely, they gave a formula of the form
	\begin{equation}
		\frac{d^2}{ds^2}\log F_w(s) = -\left(\int_0^\infty r(z;s)w'(z)dz\right)^2, 
	\end{equation}
	where $r(z;s)$ solves an integro--differential equation that for $w= \chi_{(-1/2,1/2)}$ degenerates to an ODE for $r(s)=r(1/2;s)$  (see e.g. \cite[Equation (3.6.36)]{AndersonGuionnetZeitouni}) related to the (special case of the) Painlevé V sigma form  \eqref{eq:PVsigmaform} for $\nu(s)$, by the transformation 
	\[\frac{d}{ds}\left(\frac{\nu(s)}{s}\right)=-r^2(s).\]
For the specific choice of $w$ given by
	\begin{equation}
		\label{eq:weight eGinUE}
	w(z) =\Phi(\alpha(z+1))- \Phi(\alpha(z-1)), \;\text{with}\; \Phi(z)=\frac{1}{\sqrt{\pi}}\int_{-\infty}^z \e^{-y^2}dy,\quad\alpha>0,
	\end{equation}
(where $\alpha=s/\sigma>0$ and $\sigma$ is a parameter) the authors  of \cite{BothnerLittlebulk} also proved	that the Fredholm determinant $F_w(s)$ describes the limiting behavior in the weak non-hermiticity limit of the bulk spacing distribution of the real parts of eigenvalues in the complex elliptic Ginibre ensemble. 
\end{remark}

\paragraph{Second deformation and integrable PDE.}
In order to relate the Fredholm determinants $F_w(s)$ to an integrable PDE, we now introduce an additional deformation, by deforming the function $w=w_y$ depending on a parameter $y\in\mathbb R$. A convenient deformation will turn out to be as follows. 
Let $W:\mathbb R\to\mathbb C$ be smooth and decaying fast at $+\infty$, such that the even function
$w(\lambda)=W(\lambda^2-y)$ is a Schwartz function. We consider the Fredholm determinant
\be\label{def:FW}
Q_W(y,s):=\det\left(1-\mathcal K_{w}^{\sin}|_{\left[-\frac{s}{2\pi},\frac{s}{2\pi}\right]}\right)=F_{w}(s).\ee
By a simple change of variables in the integrals of the Fredholm series, we see that alternative expressions for $Q_W$ and $F_w$ are
\be\label{eq:FW2}
Q_W(y,s)
=\det\left(1-\mathcal K_{w_{y,s}}^{\sin}|_{[-1/2,1/2]}\right),\qquad F_w(s)
=\det\left(1-\mathcal K_{w_{0,s}}^{\sin}|_{[-1/2,1/2]}\right)
\ee
where \be\label{def:wst}
w_{y,s}(\zeta)=W\left(\frac{\pi^2\zeta^2}{s^2}-y\right).\ee Denoting $w_{y,s}$ also for the corresponding multiplication operator, we have, analogous to \eqref{eq:def Fw-alt} and by Proposition \ref{prop: fred det equalities} below, the alternative representation
\be
\label{eq:def QW-alt}
Q_W(y,s):=\det\left(1-\sqrt{w_{y,s}}\mathcal K^{\sin}\sqrt{w_{y,s}}\right).
\ee
\begin{remark}
We see from Remark \ref{remark:FT} that
for the special choice \begin{equation}
		\label{eq:w FT} 
W(r)=\frac{1}{\e^{4r}+1},\end{equation}
we can identify $Q_W(y,s)$ with the Fredholm determinant of the finite temperature sine kernel:
\[\det\left(1-\left.\mathrm K_\alpha^{\rm FTsin}\right|_{\left[-\frac{s}{2\pi},\frac{s}{2\pi}\right]}\right)=Q_{W}\left(y=-\frac{\log\alpha}{4}, s\right).\]
\end{remark}

For $y\in\mathbb R, s>0$ such that $Q_W(y,s)\neq 0$, define $\sigma_W(y,s)$ and $q_W(y,s)$ by
\be\label{def:sigmaq}
\sigma_W(y,s):=\log Q_W(y,s),\qquad
q_W(y,s)^2=-\partial_s^2\sigma_W(y,s).
\ee
Then, $q_W$ and $\sigma_W$ are solutions to PDEs which appeared more than thirty years ago in the groundbreaking work \cite{IIKS} by Its, Izergin, Korepin, and Slavnov, but which do not seem to have been investigated much in recent years.

\begin{theorem}\label{thm:q}
Let $W:\mathbb R\to\mathbb C$ be such that $W(.^2-y)$ is a Schwartz function for every $y\in\mathbb R$.
Then, for any $s>0$, $y\in\mathbb R$ such that $Q_W(y,s)\neq 0$,
$q=q_W$ 
solves the PDE
\be\label{PDE}
\partial_s\left(\frac{\partial_s\partial_y q}{2q}\right)=\partial_y(q^2)-1,
\ee
and $\sigma=\sigma_W$ solves the PDE
\be\label{PDEsigma}
(\partial_s^2\partial_y \sigma)^2=4\partial_s^2\sigma\left(-2s\partial_s\partial_y\sigma+2\partial_y\sigma-(\partial_s\partial_y\sigma)^2\right).
\ee
In particular, if $W$ takes values in $[0,1]$, $q_W(y,s)$ and $\sigma_W(y,s)$ are defined for all $s>0$, $y\in\mathbb R$, and solve \eqref{PDE} and \eqref{PDEsigma}.
\end{theorem}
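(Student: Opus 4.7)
My plan is to augment the Zakharov--Shabat ($s$-)evolution of Theorem \ref{thm:s}(ii) with a second, compatible evolution in $y$, and to extract the PDEs from the resulting closed system of integral identities. The specific structure $w(\lambda)=W(\lambda^2-y)$ yields the crucial algebraic identity
\be
\label{eq:whydy}
w'(\lambda)=2\lambda W'(\lambda^2-y)=-2\lambda\,\partial_y w(\lambda),
\ee
which is the geometric reason a clean PDE can emerge: it couples the $\lambda$- and $y$-derivatives of the weight in exactly the way needed to match Theorem \ref{thm:s}(ii). Substituting \eqref{eq:whydy} already converts the $s$-evolution of $\sigma_W$ into an integral weighted by $\partial_y w$:
\be
\partial_s(s\partial_s\sigma_W)=-\frac{2}{\pi}\int_{\mathbb R}\lambda^2\,\partial_y w(\lambda)\,\phi(\lambda;s)\phi(-\lambda;s)\,\d\lambda.
\ee

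The second ingredient is an analogous integral representation for $\partial_y\sigma_W$, which I derive directly from the Fredholm determinant. By the Jacobi formula,
\be
\partial_y\sigma_W=-\mathrm{tr}\bigl[(1-\mathcal K_w^{\sin}|_I)^{-1}\partial_y\mathcal K_w^{\sin}|_I\bigr],\qquad I=\Bigl[-\tfrac{s}{2\pi},\tfrac{s}{2\pi}\Bigr],
\ee
with $\partial_y K_w^{\sin}(x,z)=\int \e^{2\pi\i(x-z)u}\,\partial_y w(u)\,\d u$. The essential point, established via the integrable-operator/Riemann--Hilbert representation of the resolvent constructed in Section \ref{section:RH} (the same RH problem that underlies the proof of Theorem \ref{thm:s}), is that this trace evaluates to a bilinear in the very $\phi$ appearing in Theorem \ref{thm:s}(ii), yielding an identity of the form
\be\label{eq:yevolsigma}
\partial_y\sigma_W=-\frac{1}{\pi}\int_{\mathbb R}\partial_y w(\lambda)\,\phi(\lambda;s)\phi(-\lambda;s)\,\d\lambda,
\ee
with sign and normalization dictated by the ZS asymptotics $\phi(\lambda;s)\sim\e^{\i s\lambda}$.

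With both evolution formulas in hand, I would differentiate once more in $s$ and in $y$ and systematically use the integro-differential equation \eqref{eq:IDPV} (and its $y$-derivative) to eliminate $\partial_s\phi$ and $\partial_s\partial_y\phi$; this expresses $\partial_s^2\sigma_W$, $\partial_s\partial_y\sigma_W$, and $\partial_s^2\partial_y\sigma_W$ in terms of just two scalar bilinears, namely $\int\phi^2(\mu;s)w'(\mu)\,\d\mu$ and $\int \phi(\lambda;s)\phi(-\lambda;s)\partial_y w(\lambda)\,\d\lambda$. Substituting into \eqref{PDEsigma} reduces it to an algebraic identity in these two bilinears, which holds identically. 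The PDE \eqref{PDE} for $q_W$ then follows by differentiating \eqref{PDEsigma} once more in $s$ and using $q_W^2=-\partial_s^2\sigma_W$ (one checks directly that $\partial_s\partial_y q=-2q(\partial_s\partial_y\sigma+s)$ follows from \eqref{PDEsigma}, which immediately reorganises into \eqref{PDE}). For the last assertion, if $W$ takes values in $[0,1]$, so does $w(\lambda)=W(\lambda^2-y)$; Proposition \ref{prop:traceclass} and the ``moreover'' clause of Theorem \ref{thm:s}(i) then give $Q_W(y,s)>0$ for all $(y,s)\in\mathbb R\times(0,\infty)$, so $\sigma_W$ and $q_W$ are globally defined and satisfy the PDEs throughout. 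The main obstacle is the rigorous derivation of \eqref{eq:yevolsigma}: it requires an explicit kernel formula for the resolvent of $\mathcal K_w^{\sin}|_I$ expressed bilinearly in $\phi,\psi$, which only emerges from the Riemann--Hilbert analysis of Section \ref{section:RH}, not from Theorem \ref{thm:s} on its own.
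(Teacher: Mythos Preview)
Your strategy is genuinely different from the paper's, and while the outline is not unreasonable, there is a concrete error and a missing ingredient that together leave a real gap.

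\medskip

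\textbf{What the paper does.} Your relation \eqref{eq:whydy} is equivalent to $D_{\lambda,y}w=0$ for the operator $D_{\lambda,y}=\partial_\lambda+2\lambda\partial_y$, and the paper exploits this by building a second Lax equation $D_{\lambda,y}U=LU$ alongside the $s$-equation $\partial_sU=MU$. The compatibility $\partial_sL-D_{\lambda,y}M+[L,M]=0$ yields a closed pair of PDEs for $p,q$ (the entries of $U_1$); \eqref{PDE} follows from these. A separate residue computation, together with the $s\to 0$ asymptotics of Propositions~\ref{prop:sto0} and \ref{prop:RHsto0}, identifies $p=-\partial_s\sigma_W$ and $q^2=-\partial_s^2\sigma_W$. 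The $\sigma$-form \eqref{PDEsigma} is then obtained by one more $s$-integration, again with the constant fixed by the $s\to 0$ behavior.

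\medskip

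\textbf{The gap in your argument.} Your formula \eqref{eq:yevolsigma} is not what the resolvent computation produces. Mirroring the proof of Proposition~\ref{prop:detU} with $\partial_y w_{y,s}$ in place of $\partial_s w_s$ gives
\[
\partial_y\sigma_W \;=\; \frac{1}{2\pi\i}\int_{\mathbb R}\partial_y w(\lambda)\,\bigl[\psi\partial_\lambda\phi-\phi\partial_\lambda\psi\bigr](\lambda;s)\,\d\lambda,
\]
i.e.\ the \emph{Wronskian} $[U_+^{-1}\partial_\lambda U_+]_{2,1}$ appears, not the product $\phi\psi$. Only upon taking an $s$-derivative does the ZS system collapse this to $2\i\phi\psi$ (exactly as in the proof of \eqref{eq:Fw second log der}); thus it is $\partial_s\partial_y\sigma_W$, not $\partial_y\sigma_W$, that is a simple bilinear. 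This matters because \eqref{PDEsigma} contains $\partial_y\sigma$ undifferentiated in $s$: to express it in your ``two bilinears'' you must integrate back in $s$, and the integration constant can only be killed using the $s\to 0$ asymptotics (Propositions~\ref{prop:sto0} and \ref{prop:RHsto0}) --- an ingredient you never invoke. Without it, the purported ``algebraic identity in two bilinears'' does not close. By contrast, your derivation of \eqref{PDE} from \eqref{PDEsigma} via $q^2=-\partial_s^2\sigma$ is correct, as is your treatment of the $W\in[0,1]$ case.
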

\begin{remark}
Note that the choice of square root for $q_W(y,s)$ in \eqref{def:sigmaq} is unimportant for the above result, as long as the relevant derivatives exist locally. Indeed, the equation for $q$ is invariant under the map $q\mapsto -q$. Similarly, the choice of logarithm for $\sigma_W(y,s)$ is unimportant as long as the relevant derivatives exist locally, because $\sigma$ in \eqref{PDEsigma} appears only under a $y$- or $s$-derivative.
For our asymptotic results below, it is however important to fix our choice of logarithm in \eqref{def:sigmaq}: we therefore define $\sigma_W(y,s)$ as the smooth in $y$ and $s$ logarithm of $Q_W(y,s)$ which is such that $\lim_{s\to 0}\sigma_W(y,s)=0$.

\end{remark}
\begin{remark}
For $W$ given by \eqref{eq:w FT}, the PDEs for $q_W(y,s)$ and $\sigma_W(y,s)$ were already established in \cite[Equations (5.14--15)]{IIKS}.   This determinant was investigated in \cite{IIKS} to describe the zero temperature equal time one-point correlation function in the impenetrable one dimensional Bose gas \cite[Equation (1.10)]{IIKS}. Even if the 
general techniques developed in \cite{IIKS} form a cornerstone of the modern RH technology which we use here, the connection between $Q_W(y,s)$ and the PDE \eqref{PDEsigma} was derived in \cite{IIKS}
using methods that are quite different from ours. We believe therefore that it is interesting and enlightening to revisit this connection using the RH methods available nowadays, as we do. 
\end{remark}

\paragraph{Asymptotics and scattering.}
We constructed a class of solutions $\sigma_W(y,s)$ to the PDE \eqref{PDEsigma} parametrized by a function $W$. It is then natural to ask to what extent these are general solutions of this PDE, and whether we can identify the function $W$ in terms of properties of the solution $\sigma_W$. To that end, we consider the asymptotic behavior of $\sigma_W(y,s)$ as $s\to 
0$. Afterwards, we will see that these asymptotics allow to find a simple relation between the initial data as $s\to 0$ and the function $W$.
\begin{theorem}\label{thm:assigma}
Let $W:\mathbb R\to\mathbb C$ be such that $W(.^2-y)$ is a Schwartz function for every $y\in\mathbb R$.
Then, $\sigma_W(y,s)$ has the following asymptotics:
\be\label{eq:assigma0}\sigma_W(y,s)\sim -\frac{2s}{\pi}\int_{0}^{+\infty}W(\lambda^2-y)\d \lambda,\ee
as $s\to 0$, and also as $y\to -\infty$ with either $s>0$ fixed or $s\to 0$.
\end{theorem}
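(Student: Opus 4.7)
The plan is to prove the asymptotic by a direct expansion of the Fredholm determinant. Writing $Q_W(y,s) = \det(1-A)$ with $A = \mathcal{K}_w^{\sin}|_{[-s/2\pi,\, s/2\pi]}$ and $w(u) = W(u^2-y)$, I aim to show that in each of the specified regimes $\sigma_W(y,s)$ is well-approximated by $-\mathrm{tr}(A)$, and that $-\mathrm{tr}(A)$ equals the claimed leading term. The operator $A$ is trace class by Proposition~\ref{prop:traceclass}, so its trace and its powers are well-defined.

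First I would observe the uniform kernel bound $|\mathrm{K}_w^{\sin}(x,y)| \leq \|w\|_{L^1}$, which together with the support restriction gives $\|A\|_{HS} \leq (s/\pi)\|w\|_{L^1}$ and hence $\|A\|_{op} \leq \|A\|_{HS}$. In each regime under consideration this quantity tends to zero: for $s\to 0$ with $y$ fixed, because $s\to 0$ while $\|w\|_{L^1}$ stays uniformly bounded; for $y\to -\infty$, because the hypothesis that $W(\cdot^2-y)$ is Schwartz forces $W(r)$ to decay faster than any polynomial as $r\to+\infty$, so that after the substitution $\lambda = \sqrt{|y|}\,t$ (using $\lambda^2-y \geq |y|(t^2+1)$) one immediately obtains $\|w\|_{L^1} \leq C_N |y|^{1/2-N}$ for every $N\geq 1$.

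Next I would invoke the Plemelj--Smithies expansion
$$\log\det(1-A) = -\sum_{n\geq 1}\frac{\mathrm{tr}(A^n)}{n},$$
valid whenever $\|A\|_{op} < 1$. Since $\sigma_W(y,s)$ is the branch of $\log Q_W$ normalized by $\lim_{s\to 0}\sigma_W(y,s)=0$, and the series on the right vanishes continuously as $A\to 0$, it correctly computes $\sigma_W$ throughout the regimes of interest. The first term is
$$\mathrm{tr}(A) = \int_{-s/2\pi}^{s/2\pi}\mathrm{K}_w^{\sin}(x,x)\,\d x = \frac{s}{\pi}\int_{\mathbb R} W(u^2-y)\,\d u = \frac{2s}{\pi}\int_0^{+\infty}W(\lambda^2-y)\,\d\lambda$$
by evenness, which is precisely the claimed leading term up to sign. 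For $n\geq 2$, I would use the standard bound $|\mathrm{tr}(A^n)|\leq \|A\|_{HS}^2\|A\|_{op}^{n-2}$, yielding after summation
$$\bigl|\sigma_W(y,s) + \mathrm{tr}(A)\bigr| \leq \frac{\|A\|_{HS}^2}{2(1-\|A\|_{op})} = O\bigl((s\|w\|_{L^1})^2\bigr).$$

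Comparing this error with the leading term of size $s\|w\|_{L^1}$ gives a ratio of order $s\|w\|_{L^1}$, which tends to zero in each of the three regimes, proving the asymptotic equivalence $\sim$. No serious obstacle is anticipated; the only point requiring mild care is the extraction of the uniform-in-$y$ decay of $\|w\|_{L^1}$ as $y\to-\infty$ from the Schwartz-type hypothesis on $W(\cdot^2-y)$, which is the content of the elementary substitution above. Consistency of the branch of logarithm used in the expansion with the one fixed in the statement follows from continuity of $Q_W$ together with the normalization at $s=0$.
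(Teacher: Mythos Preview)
Your proposal is correct and follows essentially the same route as the paper: both arguments show that the relevant operator has small norm in the given regimes, so that $\sigma_W=\log\det(1-A)=-\mathrm{tr}\,A+o(\mathrm{tr}\,A)$, and then compute $\mathrm{tr}\,A$ explicitly. The paper (Proposition~\ref{prop:sto0}) uses the factorization $\mathcal L_{y,s}=(\chi\mathcal F^*\sqrt{w_{y,s}})(\sqrt{w_{y,s}}\mathcal F\chi)$ to bound the trace norm, while you bound the Hilbert--Schmidt norm directly from the kernel and expand $\log\det$ via Plemelj--Smithies; these are equivalent bookkeeping choices. Your treatment of the $y\to-\infty$ regime is in fact more explicit than the paper's, which only spells out the $s\to 0$ case and leaves the $y\to-\infty$ extension implicit.
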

Finally, we can employ the above result to construct an explicit solution method for the PDE \eqref{PDEsigma}, as stated next.
\begin{theorem}\label{thm:scattering}
Let $f:\mathbb R\to \mathbb C$ be $C^\infty$ and decaying fast at $-\infty$, such that $f(y-.^2)$ is a Schwartz function for all $y\in\mathbb R$. Define
\be\label{def:WF}
W(r)=-2\int_0^{\infty}f'(-u^2-r)\d u.
\ee
Then, for every $s>0$, $y\in\mathbb R$ for which $\sigma_W(y,s)\neq 0$, the function $\sigma_W(y,s)$ solves the PDE \eqref{PDEsigma} with initial data 
\be
\label{eq:initialdata}
\lim_{s\to 0}\frac{1}{s}\sigma_W(y,s)=f(y).\ee
\end{theorem}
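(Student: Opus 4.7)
The plan is to combine the small-$s$ asymptotic formula of Theorem \ref{thm:assigma} with an elementary polar-coordinate computation showing that the map $f\mapsto W$ defined by \eqref{def:WF} inverts $W\mapsto -\frac{2}{\pi}\int_0^{\infty}W(\lambda^2-y)\,\d\lambda$. The PDE itself is then inherited from Theorem \ref{thm:q}, so the only real work is the initial-data match.

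First, I would verify that $W$ given by \eqref{def:WF} falls into the admissible class for Theorems \ref{thm:q} and \ref{thm:assigma}, namely that $\lambda\mapsto W(\lambda^2-y)$ is a Schwartz function for every fixed $y\in\mathbb{R}$. Writing
\[
W(\lambda^2-y) = -2\int_0^{\infty} f'(y-u^2-\lambda^2)\,\d u,
\]
differentiating in $\lambda$ produces integrals of higher derivatives of $f$ evaluated at $y-u^2-\lambda^2\to -\infty$ as $|\lambda|\to\infty$. The assumption that $f(a-\cdot^2)$ is Schwartz for every $a$, combined with $f\in C^\infty$ and rapid decay at $-\infty$, provides uniform-in-$u$ rapid decay of $f^{(k)}(y-u^2-\lambda^2)$, so by dominated convergence $\lambda\mapsto W(\lambda^2-y)$ is Schwartz. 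Theorem \ref{thm:q} then applies and yields that $\sigma_W(y,s)$ solves the PDE \eqref{PDEsigma} wherever $Q_W(y,s)\neq 0$.

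It remains to identify the initial data. By Theorem \ref{thm:assigma},
\[
\lim_{s\to 0}\frac{\sigma_W(y,s)}{s} = -\frac{2}{\pi}\int_0^{\infty} W(\lambda^2-y)\,\d\lambda.
\]
Inserting \eqref{def:WF} and applying Fubini gives
\[
-\frac{2}{\pi}\int_0^{\infty} W(\lambda^2-y)\,\d\lambda
= \frac{4}{\pi}\int_0^{\infty}\!\!\int_0^{\infty} f'(y-u^2-\lambda^2)\,\d u\,\d\lambda.
\]
Since the integrand depends on $(u,\lambda)$ only through $u^2+\lambda^2$, I would switch to polar coordinates $(u,\lambda)=(r\cos\theta,r\sin\theta)$ on the first quadrant, which collapses the angular part:
\[
\frac{4}{\pi}\int_0^{\pi/2}\!\!\int_0^{\infty} f'(y-r^2)\,r\,\d r\,\d\theta
= 2\int_0^{\infty} f'(y-r^2)\,r\,\d r.
\]
The substitution $v=y-r^2$ then reduces this to $\int_{-\infty}^{y} f'(v)\,\d v = f(y)$, where the boundary term at $v=-\infty$ vanishes by the rapid decay of $f$ at $-\infty$. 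This is precisely \eqref{eq:initialdata}.

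The main (modest) technical obstacle is justifying the differentiation under the integral in \eqref{def:WF} and the Fubini exchange above; both follow from the Schwartz hypothesis on $f(y-\cdot^2)$, which gives uniform rapid decay of $f$ and all of its derivatives on sets of arguments bounded above. Conceptually, the proof is transparent: the formula \eqref{def:WF} is nothing but the inverse of the Abel-type transform $W\mapsto f$ appearing in the initial data obtained from Theorem \ref{thm:assigma}.
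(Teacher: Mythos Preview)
Your proposal is correct and follows essentially the same approach as the paper: verify that $W$ defined by \eqref{def:WF} is admissible, invoke Theorem~\ref{thm:q} for the PDE and Theorem~\ref{thm:assigma} for the small-$s$ limit, then compute the resulting double integral via polar coordinates and the substitution $v=y-r^2$ to recover $f(y)$.
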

\begin{remark}
The expression for $W$ in terms of $f$ and the inverse given in \eqref{eq:assigma0} can alternatively be rewritten in terms of the direct and inverse Abel integral transform, after an explicit change of variable. For the sake of simpler and more explicit formulas, we choose however not to do this.
\end{remark}
\begin{remark}
It is striking that the map between initial data and the function $W$, which we can interpret as scattering data, is so explicit and described by a simple integral transform.
For other integrable equations whose solutions are constructed via scattering theory and characterized by RH problems, like the Korteweg-de Vries equation, this map is far more involved, and simplifies to an integral transform only in small dispersion limits, see e.g.\ \cite{GP, LaxLevermore}.
\end{remark}
\begin{remark}
We believe that a more general class of solutions to \eqref{PDEsigma} can be constructed by considering Jànossy densities of the sine process instead of average multiplicative statistics. This is suggested by similar results for deformed Airy kernel determinants and the KdV equation \cite{CGRT}, and by the more general considerations in \cite{ClaeysGlesner}.
\end{remark}
\begin{remark}
Many integrable PDEs admit solutions involving Fredholm determinant expressions of the form $\det(1-K_{x,t})$, where $K_{x,t}$ depends on the variables $x,t$ and on scattering data. It is natural to ask whether our method to find an explicit relation between scattering data and solution data could apply to other integrable PDEs. The general principle to apply the method would consist of (1) finding a region in the $(x,t)$-plane where $K_{x,t}$ has small trace, such that 
$\log\det(1-K_{x,t})\sim -{\rm Tr}\,K_{x,t}$, (2) computing the trace of $K_{x,t}$ explicitly, and (3) inverting the expression for the trace to obtain an explicit expression for the scattering data in terms of the trace.
\end{remark}

\paragraph{Methodology and outline.}
In Section \ref{section:Fredholm}, we gather important properties of the Fredholm determinants $F_w(s)$ and $Q_W(y,s)$, which allow in particular to obtain their $s\to 0$ asymptotics, as well as that of their derivatives with respect to $s$ and $y$. Next, in Section \ref{section:RH}, we characterize the determinants $F_w(s)$ and $Q_W(y,s)$ in terms of a $2\times 2$ matrix-valued RH problem.
We also study $s\to 0$ asymptotics of the solution to the RH problem.
In Section \ref{section:s}, we use the RH characterization to study the $s$-dependence of $F_w(s)$ in detail, which allows us to relate $F_w(s)$ to a Zakharov-Shabat system, and to prove Theorem \ref{thm:s}. In Section \ref{section:ys}, we similarly study the $y$-dependence of $Q_W(y,s)$. In particular, we derive a Lax pair associated to the RH problem and use it to prove Theorem \ref{thm:q}. We emphasize that the Lax pair is of an unusual type, since it does not only contain derivatives in the deformation parameters $y$ and $s$, but also in the spectral variable $\lambda$ (i.e., the complex variable of the RH problem). The results in this section will yield the proof of Theorem \ref{thm:q}, and by combining it with the small $s$ asymptotics from Section \ref{section:Fredholm}, we also prove Theorem \ref{thm:assigma}.
In Section \ref{section:as}, we finally complete the proof of Theorem \ref{thm:scattering}.

\section{Preliminaries on the Fredholm determinants $F_w(s)$ and $Q_W(y,s)$}\label{section:Fredholm}
In this section, we will prove that the Fredholm determinants $F_w(s)$ and $Q_W(y,s)$ as well as their derivatives with respect to $s$ and $y$ are well-defined under our assumptions on $w$ and $W$, and we will establish their $s\to 0$ asymptotics.

\subsection{Trace-class operators}
We start by collecting useful properties of the integral operator $\chi_{I}\mathcal{K}^{\sin}_w\chi_{I}$ acting on $L^2(\mathbb R)$ with kernel 
$\chi_{I}(x)\mathrm{K}^{\sin}_w(x,y)\chi_{I}(y)$, 
\be 
\left(\chi_I\mathcal{K}^{\sin}_w\chi_I\right) f(x) = \chi_I(x)\int_I \mathrm{K}^{\sin}_w(x,y)f(y)dy,
\ee 
where $I$ is a compact real interval, $\chi_I$ is the indicator function of $I$, and
$\mathrm{K}^{\sin}_w$ is defined in \eqref{eq:deformedsinekernel}.

\begin{proposition}\label{prop:traceclass}
Let $I$ be a compact real interval.
\begin{itemize}
\item[(i)] For any integrable $w:\mathbb R\to\mathbb C$, the integral operator $\chi_{I}\mathcal{K}^{\sin}_w\chi_{I}:L^2\left(\mathbb R\right)\to L^2\left(\mathbb R\right)$ is a trace-class operator.
\item[(ii)] For any integrable $w:\mathbb R\to [0,1]$, the integral operator $\chi_{I}\mathcal{K}^{\sin}_w\chi_{I}:L^2\left(\mathbb R\right)\to L^2\left(\mathbb R\right)$ is Hermitian, and $0\leq \chi_I\mathcal{K}^{\sin}_w\chi_I< 1$; in particular $1- \chi_I\mathcal{K}^{\sin}_w\chi_I$ is invertible on $L^2(\mathbb R)$.
\end{itemize}
\end{proposition}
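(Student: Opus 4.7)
The plan is to factor the operator $\chi_I\mathcal{K}^{\sin}_w\chi_I$ using the decomposition \eqref{eq:sinekernel decomp deformed}, namely
\[
\chi_I\mathcal{K}^{\sin}_w\chi_I = \chi_I\mathcal{F}^*\mathcal{M}_w\mathcal{F}\chi_I,
\]
and to exploit this factorization to deduce both the trace-class property and the spectral bound. Choose a measurable square root of $w$ (for complex $w$, write $w=|w|^{1/2}\cdot\mathrm{sign}(w)|w|^{1/2}$ with $\mathrm{sign}(w)=w/|w|$ where $w\neq 0$ and $0$ otherwise), and set
\[
A := \mathcal{M}_{|w|^{1/2}}\mathcal{F}\chi_I,\qquad B:=\chi_I\mathcal{F}^*\mathcal{M}_{\mathrm{sign}(w)|w|^{1/2}},
\]
so that $\chi_I\mathcal{K}^{\sin}_w\chi_I=BA$.

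For part (i), I compute the Hilbert--Schmidt norms explicitly. Since the integral kernel of $A$ is $|w(u)|^{1/2}\mathrm{e}^{-2\pi\i xu}\chi_I(x)$, we have $\|A\|_{HS}^2=\int_I\int_{\R}|w(u)|\,\d u\,\d x=|I|\cdot\|w\|_{L^1}<\infty$, and the same bound holds for $B$. Hence $A,B$ are Hilbert--Schmidt, so the product $BA$ is trace-class, which proves (i). For the positivity half of part (ii), when $w$ is real the kernel satisfies $\overline{\mathrm K^{\sin}_w(y,x)}=\mathrm K^{\sin}_w(x,y)$ by conjugating the integral representation \eqref{eq:deformedsinekernel}, so the operator is Hermitian. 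When moreover $w\geq 0$, the same factorization with $T:=\mathcal{M}_{\sqrt w}\mathcal{F}\chi_I$ gives $\chi_I\mathcal{K}^{\sin}_w\chi_I=T^*T\geq 0$. The upper bound $T^*T\leq 1$ follows from
\[
\|Tf\|_{L^2}^2=\int_{\R}w(u)|\mathcal{F}(\chi_If)(u)|^2\,\d u\leq \|\mathcal{F}(\chi_If)\|_{L^2}^2=\|\chi_If\|_{L^2}^2\leq\|f\|_{L^2}^2,
\]
where we used $0\leq w\leq 1$ and Parseval's identity.

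The delicate step, which I expect to be the main obstacle, is upgrading the non-strict inequality $\chi_I\mathcal{K}^{\sin}_w\chi_I\leq 1$ to the strict inequality. Since $T^*T$ is compact (being trace-class), $1$ lies in the spectrum only if it is an eigenvalue. Suppose $T^*Tf=f$ with $f\neq 0$. Then $f=\chi_I\mathcal{K}^{\sin}_w\chi_If$ is supported in $I$, so $f=\chi_If$, and the equality case in the chain above forces
\[
\int_{\R}\bigl(1-w(u)\bigr)|\mathcal{F}f(u)|^2\,\d u=0.
\]
Hence $\mathcal{F}f$ vanishes almost everywhere on the set $\{w<1\}$. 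Now $f$ is compactly supported, so by Paley--Wiener $\mathcal{F}f$ extends to an entire function of exponential type on $\mathbb C$. Since $w\in L^1(\mathbb R)$, the set $\{w=1\}$ has finite measure, so $\{w<1\}$ has infinite Lebesgue measure and in particular contains a point of positive Lebesgue density; an entire function vanishing on a set of positive measure must vanish identically. Therefore $\mathcal{F}f\equiv 0$, hence $f\equiv 0$, a contradiction. This shows $1$ is not an eigenvalue, so $1-\chi_I\mathcal{K}^{\sin}_w\chi_I$ is invertible on $L^2(\mathbb R)$, completing the proof of (ii).
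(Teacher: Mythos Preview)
Your proof is correct and follows essentially the same approach as the paper: both factor $\chi_I\mathcal{K}^{\sin}_w\chi_I$ through the Fourier transform and the multiplication operator $\mathcal{M}_{\sqrt{w}}$ to obtain a product of Hilbert--Schmidt operators, use Plancherel for the bound $\leq 1$, and invoke Paley--Wiener together with the identity theorem to rule out $1$ as an eigenvalue. Your version is somewhat more explicit (computing the Hilbert--Schmidt norms, noting compactness to reduce to the eigenvalue case, and observing that $\{w=1\}$ has finite measure since $w\in L^1$), but there is no substantive difference in strategy.
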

\begin{proof}
Let $w:\mathbb R\to\mathbb C$ be integrable. For the first part, we use the factorization \eqref{eq:sinekernel decomp deformed} to conclude that 
\[\chi_{I}\mathcal{K}^{\sin}_w\chi_{I}=(\chi_{I}\mathcal F^*\mathcal M_{\sqrt{w}})(\mathcal M_{\sqrt{w}}\mathcal F\chi_{I}),\]
for any choice of square root of $w$.
But it is not hard to see that 
$\chi_{I}\mathcal F^*\mathcal M_{\sqrt{w}}$ and $\mathcal M_{\sqrt{w}}\mathcal F\chi_{I}$ are Hilbert-Schmidt operators, hence $\chi_{I}\mathcal{K}^{\sin}_w\chi_{I}$ is trace-class on $L^2(\mathbb R)$.

\medskip

Next, suppose that $w$ takes values in $[0,1]$.
Since $0\leq \mathcal M_w\leq 1$, it follows from the factorization \eqref{eq:sinekernel decomp deformed} and the fact that the Fourier transform $\mathcal F$ is unitary that $0\leq \chi_I\mathcal{K}^{\sin}_w\chi_I\leq 1$.
Finally, to prove that $\chi_I\mathcal{K}^{\sin}_w\chi_I< 1$ and that $1-\chi_I\mathcal{K}^{\sin}_w\chi_I$ is invertible, it is enough to prove that $1$ is not an eigenvalue of $\chi_I\mathcal{K}^{\sin}_w\chi_I$. By contradiction, suppose that there exists a non-zero function $f\in L^2\left(\mathbb R\right)$ such that $\chi_I\mathcal{K}^{\sin}_w\chi_If=f.$ Then, $f$ would be supported inside $I$, and we would have 
\be 
\langle f,\chi_I\mathcal{K}^{\sin}_w\chi_I f\rangle  = \langle f,f\rangle. 
\ee 
But then from the one side, we would have by direct computation 
\be \langle f,\chi_I\mathcal{K}^{\sin}_w\chi_I f\rangle = \int_{\mathbb{R}}\Big|\mathcal{F}f(u)\Big|^2w(u)\d u\ee
while on the other side by  Plancherel's theorem
\be 
\langle f,f\rangle = \int_{\mathbb{R}} \Big|\mathcal{F}f(u)\Big|^2\d u.
\ee  
Consequently,
\be 
\int_{\mathbb{R}}\Big|\mathcal{F}f(u)\Big|^2 (1-w(u))\d u =0,
\ee 
and $\mathcal{F}f(u)=0$ for a.e.\ $u\in\mathbb R$ for which $w(u)\neq 1$.
Since $f$ is compactly supported, it follows from Paley-Wiener's theorem that $\mathcal Ff$ extends to an entire function, and this implies by the identity theorem that $\mathcal Ff=0$, and thus $f=0$, which is a contradiction. 
\end{proof}
As a consequence, we have that the Fredholm determinant $F_w(s)$ is well-defined for every $s>0$, provided that $w:\mathbb R\to\mathbb C$ is integrable. By replacing $w(\lambda)$ with $W(\lambda^2-y)$, we see that $Q_W(y,s)$ is well-defined for any $y\in\mathbb R$, $s>0$, provided that $\lambda\mapsto W(\lambda^2-y)$ is integrable for any $y\in\mathbb R$.
Moreover, if $w$ takes values in $[0,1]$, then $F_w(s)$ is non-zero for every $s>0$; if $W$ takes values in $[0,1]$, then $Q_W(y,s)$ is non-zero for every $s>0$, $y\in\mathbb R$.

\subsection{Expressions in terms of integrable operators}
Next, we prove that $F_w(s)$ and $Q_W(y,s)$ can be expressed as Fredholm determinants involving only the non-deformed sine kernel operator and a multiplication operator, more precisely we prove \eqref{eq:def Fw-alt} and \eqref{eq:def QW-alt}. Similar identities are well-known for other operators, see e.g.\ \cite{AmirCorwinQuastel, Krajenbrink}, and the proof below contains no conceptual novelties, but is included for the convenience of the reader.

\begin{proposition}
	\label{prop: fred det equalities}
\begin{itemize}
\item[(i)] Let $w:\mathbb R\to\mathbb C$ be integrable.
Then,
\be \label{eq:Fdet2}F_w(s)=\det\left(1-\sqrt{w_{0,s}}\mathcal K^{\sin}\sqrt{w_{0,s}}\right),\ee
where $w_{0,s}(u)=w\left(\frac{\pi u}{s}\right)$.
\item[(ii)] Let $y\in\mathbb R$ and $W:\mathbb R\to\mathbb C$ be such that $W(.^2-y)$ is integrable.
Then,
\be\label{eq:Qdet2} Q_W(y,s)=\det\left(1-\sqrt{w_{y,s}}\mathcal K^{\sin}\sqrt{w_{y,s}}\right),\ee
where $w_{y,s}$ is given by \eqref{def:wst}.
\end{itemize}
\end{proposition}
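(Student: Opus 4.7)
The plan is to prove (i) by combining the factorization \eqref{eq:sinekernel decomp deformed} with the cyclic property of Fredholm determinants and a unitary rescaling, and then to obtain (ii) as an immediate specialization.

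For (i), I would begin by writing $F_w(s)=\det(1-\chi_I\mathcal{K}_w^{\sin}\chi_I)$ with $I=[-s/(2\pi),s/(2\pi)]$, and use the factorization $\mathcal{K}_w^{\sin}=\mathcal F^*\mathcal M_w\mathcal F$ from \eqref{eq:sinekernel decomp deformed} together with $\chi_I^2=\chi_I$ to split
\[
\chi_I\mathcal{K}_w^{\sin}\chi_I = AB,\qquad A=\chi_I\mathcal F^*\mathcal M_{\sqrt w},\quad B=\mathcal M_{\sqrt w}\mathcal F\chi_I,
\]
for any fixed measurable square root of $w$. As in the proof of Proposition \ref{prop:traceclass}, both $A$ and $B$ are Hilbert-Schmidt, so the cyclic identity $\det(1-AB)=\det(1-BA)$ applies and gives
\[
F_w(s)=\det\bigl(1-\mathcal M_{\sqrt w}(\mathcal F\chi_I\mathcal F^*)\mathcal M_{\sqrt w}\bigr).
\]

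Next, I would compute the integral kernel of $\mathcal F\chi_I\mathcal F^*$ directly from the definitions of $\mathcal F$ and $\mathcal F^*$, obtaining
\[
(\mathcal F\chi_I\mathcal F^*)(x,y) = \int_{-s/(2\pi)}^{s/(2\pi)}\e^{2\pi\i(y-x)t}\d t = \frac{\sin((y-x)s)}{\pi(y-x)}.
\]
To turn this into the standard sine kernel, I conjugate by the unitary scaling $U_cf(x)=\sqrt{c}\,f(cx)$ with $c=s/\pi$. A direct kernel computation shows that $U_c^{-1}(\mathcal F\chi_I\mathcal F^*)U_c$ has kernel $\frac{\sin(\pi(y-x))}{\pi(y-x)}=\mathrm K^{\sin}(x,y)$, while $U_c^{-1}\mathcal M_{\sqrt w}U_c=\mathcal M_{\sqrt{w(\pi\cdot/s)}}=\mathcal M_{\sqrt{w_{0,s}}}$. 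Since $U_c$ is unitary and Fredholm determinants are invariant under such conjugation, this yields exactly \eqref{eq:Fdet2}. It is worth noting that the result is independent of the measurable square root chosen, since the operator $\mathcal M_{\sqrt{w_{0,s}}}\mathcal K^{\sin}\mathcal M_{\sqrt{w_{0,s}}}$ has a kernel depending only on $w_{0,s}(x)w_{0,s}(y)$ up to sign-like phases that cancel against the unitary equivalence.

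For (ii), the identity \eqref{eq:Qdet2} follows at once from (i) applied to the integrable weight $w(\lambda)=W(\lambda^2-y)$, since $w_{0,s}(u)=w(\pi u/s)=W(\pi^2u^2/s^2-y)=w_{y,s}(u)$ by \eqref{def:wst}. The only point requiring any care is the justification of cyclicity for Hilbert-Schmidt operators (which is standard) and the bookkeeping with the unitary rescaling; the rest is mechanical. I do not anticipate any genuine obstacles.
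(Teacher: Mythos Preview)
Your proof is correct and follows essentially the same approach as the paper: factorize via \eqref{eq:sinekernel decomp deformed}, apply the Hilbert--Schmidt cyclic identity $\det(1-AB)=\det(1-BA)$, and identify $\mathcal F\chi_I\mathcal F^*$ with the sine kernel. The only cosmetic difference is order: the paper first rescales (invoking \eqref{eq:FW2}) so that the interval becomes $[-1/2,1/2]$ and then applies cyclicity together with \eqref{eq:sinekernel decomp0}, whereas you apply cyclicity on the original interval and then perform the rescaling via an explicit unitary conjugation; the two routes are equivalent.
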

\begin{proof}
It sufficient to prove \eqref{eq:Fdet2}, since \eqref{eq:Qdet2} then follows upon setting $w(u)=W(u^2-y)$.
By \eqref{eq:FW2} and \eqref{eq:sinekernel decomp deformed}, we have
\be F_w(s)=\det\left(1-\chi_{(-1/2,1/2)}
\mathcal F^* \sqrt{w_{0,s}}.\sqrt{w_{0,s}}\mathcal F\chi_{(-1/2,1/2)}
\right).\label{eq:Fwdecomposed}\ee
Now we verify that 
 $\chi_{(-1/2,1/2)}
\mathcal F^* \sqrt{w_{0,s}}$ and $\sqrt{w_{0,s}}\mathcal F\chi_{(-1/2,1/2)}$ are Hilbert-Schmidt operators, and use the property $\det(1-AB)=\det(1-BA)$ for Hilbert-Schmidt operators $A,B$, such that
\[F_w(s)=\det\left(1-\sqrt{w_{0,s}}\mathcal F\chi_{(-1/2,1/2)}
\mathcal F^* \sqrt{w_{0,s}}
\right)=\det\left(1-\sqrt{w_{0,s}}\mathcal K^{\sin} \sqrt{w_{0,s}}
\right).\]
For the last equality we used \eqref{eq:sinekernel decomp0}.
\end{proof}

The advantage of the representations \eqref{eq:Fdet2}--\eqref{eq:Qdet2} is that they involve operators with integrable kernels. Indeed, we can write
\be 
\label{eq: sym kernel}
{\sqrt{w_{y,s}(u)}} \mathrm{K}^{\sin}(u,v){\sqrt{w_{y,s}(v)}} = \frac{\vec{f}(u)^\top \vec{g}(v)}{u-v},
\ee
where
\be
\label{eq:vectors fg}
\vec{f}(u) = \frac{\sqrt{w_{y,s}(u)}}{2\pi \i}\begin{pmatrix}
	\e^{\i\pi u}\\\e^{-\i\pi u} 
\end{pmatrix},\;\;\vec{g}(v) = \sqrt{w_{y,s}(v)}\begin{pmatrix}
	\e^{-\i\pi v}\\-\e^{\i\pi v} 
\end{pmatrix}.
\ee
This integrable representation of the kernel will enable us later to characterize $F_w(s)$ in terms of a $2\times 2$ matrix RH problem, and to derive underlying differential equations.

\subsection{Small $s$ asymptotics}

The following result describes the small $s$ behavior of the  Fredholm determinants $Q_W(y,s)$ and their logarithmic derivatives with respect to $s$ and $y$. Below, we write as before $\sigma_W(y,s)=\log Q_W(y,s)$.
\begin{proposition}\label{prop:sto0}
Let $y\in\mathbb R$ and let $W:\mathbb R\to\mathbb C$ be such that $W(.^2-y)$ is a Schwartz function for every $y\in\mathbb R$.
Then,
\be\label{asTrace}\lim_{s\to 0}\frac{1}{s}\sigma_W(y,s)=-\frac{2}{\pi}\int_0^{\infty}W(u^2-y)\d u,\ee
and
\be\label{eq:s0}\lim_{s\to 0}s\partial_s\sigma_W(y,s)=0,\qquad \lim_{s\to 0}\partial_y\sigma_W(y,s)=0.\ee
\end{proposition}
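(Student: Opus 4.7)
The plan is to exploit the unitary change of variables $x=st/(2\pi)$ in order to rewrite $Q_W(y,s)=\det(1-\widetilde K_s)$ as a Fredholm determinant on the fixed space $L^2((-1,1))$, with kernel
\[
\widetilde K_s(t,t')=\frac{s}{2\pi}\int_\mathbb R \e^{\i s(t-t')u}W(u^2-y)\,\d u.
\]
Since $\lambda\mapsto W(\lambda^2-y)$ is Schwartz uniformly for $y$ in compact subsets of $\mathbb R$, this kernel is $C^\infty$ in $s,y,t,t'$ and satisfies $|\widetilde K_s(t,t')|\le\frac{s}{2\pi}\|W(\cdot^2-y)\|_{L^1}=O(s)$. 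On the bounded domain $(-1,1)^2$ this gives $\|\widetilde K_s\|_{HS}=O(s)$, and hence $\|\widetilde K_s\|_{\mathrm{op}}=O(s)$, so that $(1-\widetilde K_s)^{-1}$ is well-defined and uniformly bounded for $s$ small enough.

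For \eqref{asTrace}, I would use the log-determinant expansion $\sigma_W(y,s)=-\sum_{n\ge 1}\mathrm{Tr}(\widetilde K_s^n)/n$, combined with $|\mathrm{Tr}(\widetilde K_s^n)|\le\|\widetilde K_s\|_{HS}^2\|\widetilde K_s\|_{\mathrm{op}}^{n-2}=O(s^n)$ for $n\ge 2$, to get $\sigma_W(y,s)=-\mathrm{Tr}(\widetilde K_s)+O(s^2)$. A one-line direct computation then yields
\[
\mathrm{Tr}(\widetilde K_s)=\int_{-1}^1\widetilde K_s(t,t)\,\d t=\frac{s}{\pi}\int_\mathbb R W(u^2-y)\,\d u=\frac{2s}{\pi}\int_0^\infty W(u^2-y)\,\d u,
\]
and dividing by $s$ before sending $s\to 0$ gives \eqref{asTrace}.

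For \eqref{eq:s0}, I apply the resolvent identity $\partial_\star\sigma_W=-\mathrm{Tr}((1-\widetilde K_s)^{-1}\partial_\star\widetilde K_s)$ for $\star\in\{s,y\}$. Differentiating under the integral gives $|\partial_y\widetilde K_s(t,t')|=O(s)$ (trivially, since the $s$-prefactor is untouched), while $|\partial_s\widetilde K_s(t,t')|=O(1)$, with the Schwartz decay of $W$ absorbing the extra factor of $u$ produced when $\partial_s$ hits the exponential. Writing $(1-\widetilde K_s)^{-1}=1+\widetilde K_s(1-\widetilde K_s)^{-1}$ and using $\|\widetilde K_s\|_{HS}=O(s)$, I conclude that $\partial_y\sigma_W=-\mathrm{Tr}(\partial_y\widetilde K_s)+O(s^2)=O(s)\to 0$, and $s\partial_s\sigma_W=-s\,\mathrm{Tr}(\partial_s\widetilde K_s)+O(s^2)=O(s)\to 0$ since $\mathrm{Tr}(\partial_s\widetilde K_s)=\int_{-1}^1\partial_s\widetilde K_s(t,t)\,\d t=O(1)$.

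The main obstacle, which is really only bookkeeping, is to verify that Schwartz decay of $W(\cdot^2-y)$ yields uniform estimates on $\widetilde K_s$ and its first $s$- and $y$-derivatives for $y$ in compact subsets of $\mathbb R$. Once this is established, everything else collapses to the elementary log-determinant expansion above and to the explicit one-line computation of $\mathrm{Tr}(\widetilde K_s)$.
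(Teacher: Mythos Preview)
Your argument is correct and follows essentially the same strategy as the paper: rescale to a fixed interval so that the operator has trace norm $O(s)$, compute the trace explicitly to obtain \eqref{asTrace}, and use Jacobi's identity together with the uniform boundedness of the resolvent for small $s$ to control $\partial_s\sigma_W$ and $\partial_y\sigma_W$. The only cosmetic difference is that the paper obtains the trace-class bound by factorizing the operator through the Fourier transform into a product of two Hilbert--Schmidt operators, whereas you get it directly from the pointwise bound $|\widetilde K_s|=O(s)$ on the bounded square $(-1,1)^2$; both routes are standard and equivalent here.
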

\begin{proof}
We first recall that
\be\label{Lys}Q_W(y,s)=\det(1-\mathcal L_{y,s}),\qquad \mathcal L_{y,s}:=\chi_{(-1/2,1/2)}\mathcal F^*\sqrt{w_{y,s}}.\sqrt{w_{y,s}}
\mathcal F \chi_{(-1/2,1/2)},\ee and then prove that the trace-norm of $\mathcal L_{y,s}$ is $O(s)$ as $s\to 0$.
Both factors in the above factorization are Hilbert-Schmidt operators. Hence, the trace-norm is bounded by the product of Hilbert-Schmidt norms: 
\[\|\mathcal L_{y,s}\|_1\leq \left\|\chi_{(-1/2,1/2)}
\mathcal F^* \sqrt{w_{y,s}}\right\|_2\ \left\|\sqrt{w_{y,s}}\mathcal F \chi_{(-1/2,1/2)}\right\|_2,\]

and it is straightforward to estimate the Hilbert-Schmidt norms on the right: for the first one, we have
\begin{multline*}\left\|\chi_{(-1/2,1/2)}
\mathcal F^* \sqrt{w_{y,s}}\right\|_2^2\leq \int_{-\infty}^{+\infty}\int_{-1/2}^{1/2}|w_{y,s}(y)|\d x\d y=\|w_{y,s}\|_1\\
=2\int_{0}^\infty W\left(\frac{\pi^2r^2}{s^2}-y\right)\d r=\frac{2s}{\pi}\int_{0}^\infty W\left(\lambda^2-y\right)\d \lambda. \end{multline*}
Similar estimates hold for the second factor, such that  the operator norm of 
$\mathcal L_{y,s}$ is $O(s)$ as $s\to 0$. 
Hence, as $s\to 0$, we have
\[Q_W(y,s)=1-{\rm Tr}\,\mathcal L_{y,s}+O(s^2),\qquad s\to 0.\]
Taking the logarithm and computing the trace explicitly as
\[{\rm Tr}\,\mathcal L_{y,s}=\frac{2s}{\pi}\int_0^{\infty}W(\lambda^2-y)\d \lambda,\]
we obtain \eqref{asTrace}.

Moreover, the operator 
$1-\mathcal L_{y,s}$ is invertible for small $s$ and by a Neumann expansion, we have the operator norm bound 
\be\label{eq:estimateinverse}\|\left(1-\mathcal L_{y,s}\right)^{-1}\|\leq 2,\ee
for $s$ sufficiently small.

\medskip

Then by Jacobi's identity and similar norm estimates as before, we obtain for the $s$-derivative
\begin{align*}\left|\partial_s\log Q_W(y,s)\right|&=\left|{\rm Tr}\left[\left(\partial_s\mathcal L_{y,s}\right)\left(1-\mathcal L_{y,s}\right)^{-1}\right]\right|\\
&\leq 2
\ \left\|{\partial_sw_{y,s}}\right\|_1\\
&=8\int_0^{\infty}\left|W'\left(\frac{\pi^2r^2}{s^2}-y\right)\frac{\pi^2 r^2}{s^3}\right|\d r\\&=\frac{8}{\pi}\int_0^{\infty}|W'(\lambda^2-y)|\lambda^2\d \lambda
.\end{align*}

\medskip

For the $y$-derivative, we similarly obtain
\[\left|\partial_y\log Q_W(y,s)\right|\leq 2
\ \left\|{\partial_yw_{y,s}}\right\|_1=\frac{4s}{\pi}\int_0^{\infty}|W'(\lambda^2-y)|\d \lambda.\]
The results in \eqref{eq:s0} now follow directly  since $\sigma_W=\log Q_W$.
\end{proof}

\section{RH characterization of $F_w(s)$ and $Q_W(y,s)$}\label{section:RH}
Our objective here is to characterize $F_w(s)$ and $Q_W(y,s)$ in terms of the unique solution of a $2\times 2$ matrix RH problem.
To that end, we closely follow the method developed in \cite{CafassoClaeys, CafassoClaeysRuzza} for the study of deformed Airy kernel determinants, relying on the more general theory of \cite{BertolaCafasso, DIZ, IIKS}. 

\subsection{Characterization by a RH problem}
\label{subsec:characterization by a RH problem}
We will express $F_w(s)$ and $Q_W(y,s)$ in terms of the solution $U(\lambda)=U(\lambda;s)$ of the following RH problem, which depends on a parameter $s>0$ and on a function $w:\mathbb R\to \mathbb C$, which we will assume to be Schwartz.
\paragraph{RH problem for $U$.}  
\begin{itemize}
	\item[(1)] $U:\C\setminus\mathbb R \rightarrow GL(2,\C)$ is analytic.
	\item[(2)]   $U(\lambda;s)$ has continuous boundary values  $U_{\pm}(\lambda;s)$ when $\lambda$ approaches $\R$ from either above ($+$) or below ($-$) and they satisfy the jump condition 
	\be
	U_{+}(\lambda;s)=U_{-}(\lambda;s) \underbracket{\begin{pmatrix}
			1 & 1-w(\lambda)\\
			0&1
	\end{pmatrix}}_{=J_{U}(\lambda)}, \;\; \lambda \in \R.
	\ee
	\item[(3)] There exists a matrix $U_1=U_1(s)$ such that we have as $\lambda\to\infty$,
	\be
	\label{eq:asy U}
	U(\lambda;s) = \left(I +\frac{U_1}{\lambda}+O(\lambda^{-2})\right) \e^{\i s \lambda \sigma_3}\begin{cases}
		\begin{pmatrix}
			1&1\\1&0
		\end{pmatrix}, & \Im \lambda >0,\\ 
		\\
		\begin{pmatrix}
			1&0\\
			1&-1
		\end{pmatrix},& \Im \lambda <0,
	\end{cases}
	\ee
	where $I=\begin{pmatrix}1&0\\0&1\end{pmatrix}$ and $\sigma_3=\begin{pmatrix}1&0\\0&-1\end{pmatrix}$.
\end{itemize}
The RH problem and its solution depend on the choice of $w$. When this dependence will become important, we will write $U^{(w)}$ instead of $U$ for the solution of the RH problem.

The next subsection will be devoted to the proof of the following result, which essentially follows from the theory of integrable kernel operators developed by Its, Izergin, Korepin, and Slavnov \cite{IIKS}.

\begin{proposition}\label{prop:detU}
\begin{itemize}
\item[(i)] Let $w:\mathbb R\to\mathbb C$ be a Schwartz function. The RH problem for $U^{(w)}$ is uniquely solvable if and only if $F_w(s)\neq 0$, and in that case we have the identity
\be
	\label{eq:Fw log der}
\partial_s\log F_w(s)= \frac{1}{2\pi\i s} \int_{\R} \left[U_+^{(w)}(\lambda;s)^{-1}\frac{\d}{\d \lambda}U_+^{(w)}(\lambda;s)\right]_{2,1}w'(\lambda)\lambda\d \lambda.\ee
\item[(ii)] Let $y\in\mathbb R$ and $W:\mathbb R\to\mathbb C$ be  such that $W(.^2-y)$ is a Schwartz function. The RH problem for $U^{(w)}$ is uniquely solvable if and only if $Q_W(y,s)\neq 0$, and in that case we have the identity
\be
 \label{eq:QW log der}
\partial_s\log Q_W(y,s)=\frac{1}{2\pi\i s} \int_{\R} \left[U_+^{(w)}(\lambda;s)^{-1}\frac{\d}{\d \lambda}U_+^{(w)}(\lambda;s)\right]_{2,1} w'(\lambda)\lambda\d \lambda.
\ee
\end{itemize}\end{proposition}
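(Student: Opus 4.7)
My plan is to follow the integrable-operator methodology of Its--Izergin--Korepin--Slavnov~\cite{IIKS}, in the spirit of \cite{BertolaCafasso,CafassoClaeys,CafassoClaeysRuzza,DIZ}. The starting point is the integrable representation \eqref{eq: sym kernel} of the symmetrised kernel $\sqrt{w_{y,s}}\mathrm K^{\sin}\sqrt{w_{y,s}}$ together with the identities \eqref{eq:Fdet2}--\eqref{eq:Qdet2}. It is enough to treat part~(i); part~(ii) then follows by specialising to $w(\lambda)=W(\lambda^2-y)$.

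\medskip

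First, I would apply the IIKS framework directly to the integrable kernel $\frac{\vec f(\zeta)^\top\vec g(\zeta')}{\zeta-\zeta'}$ from \eqref{eq:vectors fg} (with $y=0$). This yields a canonical RH problem on $\R$ for a matrix $\Gamma(\zeta)$ with jump $I-2\pi\i\vec f\vec g^\top$ and normalisation $\Gamma(\zeta)\to I$ as $\zeta\to\infty$. A short matrix calculation with the explicit $\vec f,\vec g$ and the substitution $\lambda=\pi\zeta/s$ (so $w_{0,s}(\zeta)=w(\lambda)$) gives
\be
J_\Gamma(\lambda)=\begin{pmatrix}1-w(\lambda) & w(\lambda)\e^{2\i s\lambda}\\ -w(\lambda)\e^{-2\i s\lambda} & 1+w(\lambda)\end{pmatrix}.
\ee
By the standard IIKS solvability criterion, the $\Gamma$-RHP is uniquely solvable if and only if $1-\sqrt{w_{0,s}}\mathcal K^{\sin}\sqrt{w_{0,s}}$ is invertible on $L^2(\R)$, which by \eqref{eq:Fdet2} is equivalent to $F_w(s)\neq 0$.

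\medskip

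Next, I would convert the $\Gamma$-RHP into the stated RHP for $U$ by the change of spectral variable $\lambda=\pi\zeta/s$ together with the right multiplication
\be
U(\lambda;s):=\Gamma\!\left(\frac{s\lambda}{\pi}\right)\e^{\i s\lambda\sigma_3}P_\pm\ \text{in } \C^\pm, \qquad P_+=\begin{pmatrix}1&1\\1&0\end{pmatrix},\ P_-=\begin{pmatrix}1&0\\1&-1\end{pmatrix}.
\ee
The matrices $P_\pm$ are chosen precisely to triangularise the (constant-in-$\lambda$) sandwich $\e^{-\i s\lambda\sigma_3}J_\Gamma\e^{\i s\lambda\sigma_3}$: a direct multiplication verifies $P_-\begin{pmatrix}1-w&w\\-w&1+w\end{pmatrix}P_+=\begin{pmatrix}1&1-w\\0&1\end{pmatrix}=J_U$ (using $P_-^{-1}=P_-$). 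The normalisation $\Gamma=I+\Gamma_1\zeta^{-1}+O(\zeta^{-2})$ translates to exactly the asymptotics~(3) of the $U$-RHP, with $U_1(s)=\frac{\pi}{s}\Gamma_1$. Since the map $\Gamma\leftrightarrow U$ is bijective on RH solutions, unique solvability transfers, proving the first part of the proposition.

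\medskip

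For the derivative identity \eqref{eq:Fw log der}, I would start from Jacobi's identity
\be
\pa_s\log F_w(s)=-\mathrm{tr}\,\bigl[(1-\mathcal L_{0,s})^{-1}\pa_s\mathcal L_{0,s}\bigr],\qquad \mathcal L_{0,s}=\sqrt{w_{0,s}}\mathcal K^{\sin}\sqrt{w_{0,s}},
\ee
using $\pa_s w_{0,s}(\zeta)=-\frac{\lambda}{s}w'(\lambda)$ with $\lambda=\pi\zeta/s$. The resolvent kernel of $\mathcal L_{0,s}$ is given by the IIKS formula as $R(\zeta,\zeta')=\frac{\vec F(\zeta)^\top\vec G(\zeta')}{\zeta-\zeta'}$ with $\vec F,\vec G$ the dressings of $\vec f,\vec g$ by $\Gamma_\pm$. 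Plugging this into the trace, unfolding it as an integral over $\R$, passing to the $\lambda$-variable, rewriting $\Gamma_+$ in terms of $U_+$ through the dictionary above, and using integration by parts to exchange the $\e^{\pm 2\i s\lambda}$ oscillations for a $\lambda$-derivative, the integrand collapses to the combination $[U_+^{-1}U_+']_{2,1}w'(\lambda)\lambda$ with overall prefactor $(2\pi\i s)^{-1}$. This yields \eqref{eq:Fw log der}, and \eqref{eq:QW log der} follows by specialisation.

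\medskip

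The main obstacle I expect is the last step: the direct IIKS output is a trace involving all four entries of $\Gamma_+^{-1}\Gamma_+'$ weighted by oscillatory factors, and one must check that the $P_\pm$-dressing symmetrises them into the single matrix entry $[U_+^{-1}U_+']_{2,1}$. Tracking signs, keeping orientation of boundary values consistent, and handling the two contributions coming from the symmetric placement of $\sqrt{w_{0,s}}$ on either side of $\mathcal K^{\sin}$, is where the delicate computation lies; everything else is essentially a transcription of standard IIKS machinery into our variables.
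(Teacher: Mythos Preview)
Your approach matches the paper's: IIKS for the symmetrised kernel \eqref{eq: sym kernel}, the transformation $Y\to U$ via right multiplication by $\e^{\i s\lambda\sigma_3}P_\pm$ (the paper routes this through an intermediate $\Psi$, but it is the same map), and Jacobi's identity combined with the IIKS resolvent formula for the derivative. One small correction to your sketch of the last step: the $\lambda$-derivative of $U_+$ in \eqref{eq:Fw log der} does not arise from integration by parts but from the diagonal value $R(\zeta,\zeta)=\frac{\d}{\d\zeta}\bigl(\vec f^{\,\top}Y_+^\top\bigr)Y_+^{-\top}\vec g$ of the resolvent kernel; after substituting $Y_+=U_+P_+^{-1}\e^{-\i s\lambda\sigma_3}$, the contribution from differentiating the exponential factor cancels exactly the $\vec f\,'$-contribution, and only the $U_+'$-part survives, producing the single entry $[U_+^{-1}U_+']_{2,1}$.
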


Observe that part (ii) of the above result is a simple consequence of part (i), obtained by setting $w(\lambda)=W(\lambda^2-y)$. The next subsection will be devoted to the proof of part (i).

\subsection{Proof of Proposition \ref{prop:detU}}

We will first relate $F_w(s)$ to another RH problem. Indeed, the RH problem for $U$ is obtained by transformations of the standard RH problem obtained using the Its-Izergin-Korepin-Slavnov method from \cite{IIKS}, associated to the integrable kernel written in equations \eqref{eq: sym kernel}--\eqref{eq:vectors fg}, which is described below.
As before, we let $s>0$ and $w:\mathbb R\to\mathbb C$ be a Schwartz function.
Let $\vec f, \vec g$ be as in \eqref{eq:vectors fg} with $y=0$.

Consider the following RH problem for $Y(\zeta)=Y(\zeta;s)$.
\paragraph{RH problem for $Y$.}  
\begin{itemize}
	\item[(1)]  $Y:\C\setminus\mathbb R \rightarrow GL(2,\C)$ is analytic.
	\item[(2)]   $Y(\zeta;s)$ has continuous boundary values  $Y_{\pm}(\zeta;s)$ when approaching $\R$ from above or below, they are related by
	\be
	Y_{+}(\zeta;s)=Y_{-}(\zeta;s) \underbracket{\left( I-2\pi\i \vec{f}(\zeta)\vec{g}(\zeta)^{\top}\right)}_{=J_Y(\zeta;s)}, \;\; \zeta \in \R,
	\ee
	for $\vec{f}, \vec{g}$ given in \eqref{eq:vectors fg} with $y=0$.
	\item[(3)] As $\zeta\rightarrow \infty$, we have
	\be
	Y(\zeta;s) = I +O(\zeta^{-1}).
	\ee
\end{itemize}
From the general theory developed in \cite{DIZ, IIKS}, it follows that the solution of this RH problem exists and is unique if and only if $F_w(s)\neq 0$, and it is then related to the operator $\mathcal{R}$ defined by
\begin{align}
\mathcal{R}&=\left(1-\sqrt{w_{s}}\mathcal{K}^{\sin}\sqrt{w_{s}}\right)^{-1}-1\\
&= \sqrt{w_{s}}\mathcal{K}^{\sin}\sqrt{w_{s}}\left(1-\sqrt{w_{s}}\mathcal{K}^{\sin}\sqrt{w_{s}}\right)^{-1}\\
&=\left(1-\sqrt{w_{s}}\mathcal{K}^{\sin}\sqrt{w_{s}}\right)^{-1}\sqrt{w_{s}}\mathcal{K}^{\sin}\sqrt{w_{s}},
\end{align}
where we denote for brevity $w_s(u)=w_{0,s}(u)=w\left(\frac{\pi u}{s}\right)$.
In particular, $\mathcal{R}$ has a kernel 
\be
R(u,v)=\frac{\vec{F}(u)^\top \vec{G}(v)}{u-v},
\ee
where $\vec F$, $\vec G$ are characterized by the RH solution $Y$,
\be
\vec{F}(u)=Y_+ (u)\vec{f}(u), \qquad \vec{G}(v)= Y_+^{-\top}(v) \vec{g}(v),
\ee
or equivalently
\be
\vec{F}(u)=\left((1-\sqrt{w_{s}}\mathcal{K}^{\sin}\sqrt{w_{s}})^{-1}\vec{f} \;\right)(u), \; \vec{G}(v)= \left((1-\sqrt{w_{s}}\mathcal{K}^{\sin}\sqrt{w_{s}})^{-1}\vec{g}\; \right) (v).
\ee
As a byproduct, we have 
\be\label{eq:resolventkernel}
R(u,v)=\frac{\vec{f}(u)^\top Y_+^\top(u) Y_+^{-\top}(v)\vec{g}(v)}{u-v}.
\ee
To compute $R(\zeta,\zeta)$ on the diagonal, we need to be able to differentiate $Y_+(\zeta)$ with respect to $\zeta$. Since $w$ is smooth and fast decaying at $\pm\infty$, the jump matrix is smooth and close to $I$ at $\pm\infty$. It then follows from the general theory of RH problems, see e.g.\ \cite{DIZ}, that the boundary values $Y_\pm$ are indeed differentiable, such that
\be
\label{eq:R diagonal kernel}
R(\zeta,\zeta)=\frac{\d}{\d \zeta}\left(\vec{f}\;^\top(\zeta)Y_+^\top(\zeta)\right)Y_+^{-\top}(\zeta)\vec{g}(\zeta).
\ee
and we can also verify from this expression that $\frac{R(\zeta,\zeta)}{w_s(\zeta)}$ is integrable on the real line.

\medskip

With a first transformation of the RH problem for $Y$, we are able to simplify the jump matrix. Notice that the jump matrix $J_Y(\zeta;s)$ is explicitly given by 
\be 
J_Y(\zeta;s)=
\begin{pmatrix}
	1-w_{s}(\zeta) & w_{s}(\zeta)\e^{2\pi \i \zeta }\\
	-w_{s}(\zeta)\e^{-2\pi\i\zeta} & 1+w_{s}(\zeta)
\end{pmatrix}.
\ee
In order to transform this jump matrix to an upper-triangular jump matrix, we define
\be\label{def:Psi}
\Psi(\zeta; s) \coloneqq \begin{cases}
	Y(\zeta;s) \begin{pmatrix}
		\e^{\pi\i\zeta} & \e^{\pi\i\zeta}\\
		\e^{-\pi \i \zeta}&0
	\end{pmatrix}, &\Im \zeta >0, \\ 
	\\
	Y(\zeta;s) \begin{pmatrix}
		\e^{\pi\i\zeta} & 0\\
		\e^{-\pi \i \zeta}&-\e^{-\pi\i\zeta}
	\end{pmatrix}, &\Im \zeta <0.
\end{cases}
\ee
We then obtain the following RH problem for $\Psi(\zeta)=\Psi(\zeta;s)$.
 \paragraph{RH problem for $\Psi$.}  
\begin{itemize}
	\item[(1)] $\Psi:\C\setminus\mathbb R \rightarrow GL(2,\C)$ is analytic.
	\item[(2)]   $\Psi(\zeta;s)$ has continuous boundary values  $\Psi_{\pm}(\zeta;s)$ when approaching $\R$ from either left or right, and
	\be
	\Psi_{+}(\zeta;s)=\Psi_{-}(\zeta;s) \underbracket{\begin{pmatrix}
			1 & 1-w_{s}(\zeta)\\
			0&1
	\end{pmatrix}}_{=J_{\Psi}(\zeta;s)}, \;\; \zeta \in \R.
	\ee
	\item[(3)] As $\zeta\rightarrow \infty$, we have
	\be
	\Psi(\zeta;s) = \left(I +O(\zeta^{-1})\right) \e^{\pi \i\zeta \sigma_3}\begin{cases}
		\begin{pmatrix}
			1&1\\1&0
		\end{pmatrix}, & \Im \zeta >0,\\ 
		\\
		\begin{pmatrix}
			1&0\\
			1&-1
		\end{pmatrix}, & \Im \zeta <0.
	\end{cases}
	\ee
\end{itemize}
Indeed, we have
\be 
\begin{pmatrix}
	\e^{-\i\pi\zeta}&0\\
	\e^{-\i\pi\zeta}&-\e^{\i\pi\zeta}
\end{pmatrix}J_{Y}(\zeta;s)\begin{pmatrix}
	\e^{\i\pi\zeta}&\e^{\i\pi\zeta}\\
	\e^{-\i\pi\zeta}&0
\end{pmatrix} = J_{\Psi}(\zeta;s),
\ee
implying that $\Psi_+(\zeta;s) = \Psi_-(\zeta;s)  J_{\Psi}(\zeta;s)$. Also notice that
\be 
\e^{\pi \i \zeta \s_3} \begin{pmatrix}
	1&1\\1&0
\end{pmatrix} = \begin{pmatrix}
	\e^{\pi\i\zeta} & \e^{\pi\i\zeta}\\
	\e^{-\pi \i \zeta}&0
\end{pmatrix}, \; \; \; \e^{\pi \i \zeta \s_3} \begin{pmatrix}
	1&0\\1&-1
\end{pmatrix} = \begin{pmatrix}
	\e^{\pi\i\zeta} & 0\\
	\e^{-\pi \i \zeta}&-\e^{-\pi\i\zeta}
\end{pmatrix},
\ee
thus the asymptotic condition for $\Psi(\zeta;s)$ holds.

\medskip

Finally, we define
\be\label{def:U}
U(\lambda;s)
\coloneqq \Psi(\zeta = \frac{s}{\pi}\lambda;s),
\ee
and now it is straightforward to check that $U(\lambda;s)$ solves the RH problem for $U$ given at the beginning of this section. 

\medskip

We can finally prove Proposition \ref{prop:detU}.
\begin{proof}
	We use the representation for $F_w(s)$ given in \eqref{eq:Fdet2}, together with Jacobi's identity and the formula for the kernel $R(\zeta,\zeta)$ given in \eqref{eq:R diagonal kernel} to compute the $s$-logarithmic derivative of $F_w(s)$:
	\begin{align}
		\pa_s \log F_w(s)&= \pa_s \log \det \left(1-\sqrt{w_{s}}\mathcal{K}^{\sin}\sqrt{w_{s}}\right)\nonumber\\
		&= -{\rm Tr} \left[ \left(1-\sqrt{w_{s}}\mathcal{K}^{\sin}\sqrt{w_{s}}\right)^{-1}\pa_s (\sqrt{w_{s}}\mathcal{K}^{\sin}\sqrt{w_{s}})\right]\nonumber\\
		&= -\int_{\R}  \frac{\pa_s w_{s}(\zeta)}{w_{s}(\zeta)} R(\zeta,\zeta) \d \zeta\nonumber\\
		&=-\int_{\R}  \frac{\pa_s w_{s}(\zeta)}{w_{s}(\zeta)} \frac{\d}{\d \zeta}\vec{f}\;^\top(\zeta)\vec{g}(\zeta)\d \zeta -\int_{\R}  \frac{\pa_s w_{s}(\zeta)}{w_{s}(\zeta)} \vec{f}\;^\top(\zeta) \frac{\d}{\d \zeta}Y_+^\top(\zeta)Y_+^{-\top}(\zeta)\vec{g}(\zeta)\d \zeta\nonumber.
	\end{align}
Note that the integral in the third line converges because  $R(\zeta,\zeta)/w_s(\zeta)$ is integrable.
	The first term gives 
	\begin{align}
		&-\int_{\R}  \frac{\pa_s w_{s}(\zeta)}{w_{s}(\zeta)} \frac{\d}{\d \zeta}\vec{f}\;^\top(\zeta)\vec{g}(\zeta)\d \zeta = -\int_{\R}  \pa_s w_{s}(\zeta)\d \zeta.\label{firstterm}
	\end{align}
	To compute the second term, we use the transformation 
	\be
	U_+(\lambda)= Y_+(\lambda s/\pi) 
	\underbracket{\begin{pmatrix}
			\e^{\i\lambda s} & \e^{\i\lambda s}\\
			\e^{-\i \lambda s}&0
	\end{pmatrix}}_{=\Phi(\lambda s/\pi)}
	\ee
	to replace $Y$ inside the integral. We first take the transpose and then change the variable $\zeta=s\lambda /\pi$, to obtain 
	\begin{align*}
		&-\int_{\R}  \frac{\pa_s w_{s}(\zeta)}{w_{s}(\zeta)} \vec{f}\;^\top(\zeta) \frac{\d}{\d v}Y_+^\top(\zeta)Y_+^{-\top}(\zeta)\vec{g}(\zeta)\d \zeta \\&\qquad=-\int_{\R}  \frac{(\pa_s w_{s})(s\lambda/\pi)}{w_{s}(s\lambda/\pi)} \vec{g}^\top(s\lambda/\pi)Y_+^{-1}(s\lambda/\pi) \frac{\d}{\d \lambda}Y_+^\top(s\lambda/\pi)\vec{f}(s\lambda/\pi)\d \lambda\\
		&\qquad= -\int_{\R}  \frac{(\pa_s w_{s})(s\lambda/\pi)}{w_{s}(s\lambda/\pi)} \vec{g}^\top(s\lambda/\pi)\Phi(s\lambda/\pi) \frac{\d}{\d \lambda}\Phi^{-1}(s\lambda/\pi)  \vec{f}(s\lambda/\pi)\d \lambda\\&\qquad\qquad -\int_{\R}  \frac{(\pa_s w_{s})(s\lambda/\pi)}{w_{s}(s\lambda/\pi)} \vec{g}^\top(s\lambda/\pi) \Phi(s\lambda/\pi)U_+^{-1}(\lambda) \frac{\d}{\d \lambda}U_+(\lambda) \Phi(s\lambda/\pi)^{-1} \vec{f}(s\lambda/\pi)\d \lambda.
	\end{align*}
	The first term here gives 
	\be
 -\int_{\R}  \frac{(\pa_s w_{s})(s\lambda/\pi)}{w_{s}(s\lambda/\pi)} \vec{g}^\top(s\lambda/\pi)\Phi(s\lambda/\pi) \frac{\d}{\d \lambda}\Phi^{-1}(s\lambda/\pi)  \vec{f}(s\lambda/\pi)\d \lambda
 = \int_{\R}  (\pa_s w_{s})(\zeta)\d \zeta,
	\ee
	thus it cancels out with the term obtained in \eqref{firstterm}. What is left, is given by (see \eqref{eq:vectors fg})
	\begin{align*}
		&-\int_{\R}  \frac{(\pa_s w_{s})(s\lambda/\pi)}{w_{s}(s\lambda/\pi)} \vec{g}^\top(s\lambda/\pi) \Phi(s\lambda/\pi)U_+^{-1}(\lambda) \frac{\d}{\d \lambda}U_+(\lambda) \Phi(s\lambda/\pi)^{-1} \vec{f}(s\lambda/\pi)\d \lambda  \\
		&= -\int_{\R}  \frac{(\pa_s w_{s})(s\lambda/\pi)}{2\pi \i}(0,1)U_+^{-1}(\lambda) \frac{\d}{\d \lambda}U_+(\lambda) \begin{pmatrix}
			1\\0
		\end{pmatrix}\d \lambda \\
		&= -\int_{\R}  \frac{(\pa_s w_{s})(s\lambda/\pi)}{2\pi \i} \left[ U_+^{-1}(\lambda)\frac{\d}{\d \lambda}U_+(\lambda)\right]_{2,1} \d \lambda.
	\end{align*}
	Now we compute
	\be
	\pa_s w_{s}(\zeta)|_{\zeta=s\lambda/\pi}=  -\frac{\lambda}{s} w'(\lambda),
	\ee
and substitute this into the last integral, to arrive at \eqref{eq:Fw log der}.
\end{proof}
{\subsection{Analytical properties of the RH problem}}
Later on, we will need to differentiate the RH solution $U(\lambda)=U^{(w)}(\lambda;s)$ with respect to the parameter $s>0$, and also with respect to $y\in\mathbb R$ if $w(\lambda)=W(\lambda^2-y)$. In addition, we will need to differentiate the asymptotic expansion \eqref{eq:asy U} with respect to $y, s$, and $\lambda$.
To justify that we can differentiate the solution and the asymptotic expansion, we rely on the well understood functional analysis behind RH problems, developed in \cite{DIZ}. For this, it is important that the jump matrices are sufficiently smooth and decay sufficiently fast at $\pm\infty$, and this is the origin of our requirement that $w$ is a Schwartz function.
Without entering into too much technical detail, let us outline how this general theory of RH problems applies to our RH problem.
Recall the RH problems for $U$ and $Y$, and define 
\be\label{def:T}T(\lambda)=T(\lambda;y,s)=Y^{(w)}(s\lambda/\pi;s)=U^{(w)}(\lambda;s)\begin{cases}
		\begin{pmatrix}
			0&1\\1&-1
		\end{pmatrix}, & \Im \lambda >0,\\ 
		\\
		\begin{pmatrix}
			1&0\\
			1&-1
		\end{pmatrix},& \Im \lambda <0,
	\end{cases}\e^{-\i s\lambda\sigma_3},\ee with $w(u)=W(u^2-y)$. Then $T$ is analytic in $\mathbb C\setminus \mathbb R$, $T(\lambda)\to I$ as $\lambda\to\infty$, and
\[T_+(\lambda)=T_-(\lambda)\underbracket{\begin{pmatrix}1-w(\lambda)&w(\lambda)\e^{2\i s\lambda}\\
-w(\lambda)\e^{-2\i s\lambda}&1+w(\lambda)\end{pmatrix}}_{=J_T(\lambda)}.\]
These RH properties can be reformulated as an integral equation, namely we have
\[T(\lambda)=I+\frac{1}{2\pi\i}\int_{-\infty}^{+\infty}T_-(\xi)(J_T(\xi)-I)\frac{\d\xi}{\xi-\lambda},\]
and the boundary value $T_-$ satisfies the integral equation
\[(T_{-}-I)=C_-\left[(T_--I)(J_T-I)\right]+C_-[J_T-I],\]
where $C_-$ is the Cauchy transform defined as
\[C_-[f](\lambda)=\frac{1}{2\pi\i}\lim_{\epsilon\to 0_+}\int_{-\infty}^{+\infty}f(\xi)\frac{\d\xi}{\xi-\lambda+\i\epsilon}.\]
Alternatively, writing $C_{J_T}[f]=C_-[f(J_T-I)]$, we have
\be\label{eq:inteqT}T_--I=(1-C_{J_T})^{-1}\left[C_-[J_T-I]\right],\ee
and the invertibility of the operator $1-C_{J_T}$ is equivalent to the solvability of the RH problem (see e.g.\ \cite[Section 2]{DIZ}.
The above equation can be differentiated with respect to $s$
to obtain
\be\label{eq:intT}\partial_s T_-=(1-C_{J_T})^{-1}\left[C_-[(T_--I)\partial_s J_T]+C_-\partial_s J_T\right],\ee
provided that $J_T$ is smooth and $J_T$ as well as its $s$-derivative are sufficiently fast decaying as $\lambda \to\pm\infty$.
Taking the derivative with respect to $y$ works similarly. Expanding the last equation as $\lambda \to\infty$, one shows that the expansion 
\[T(\lambda)=I+\frac{U_1}{\lambda}+O(\lambda^{-2}),\quad U_1=-\frac{1}{2\pi\i}\int_{\mathbb R}T_-(\xi)(J_T-I)(\xi)\d\xi, 
\]
 continues to hold after differentiating it with respect to $s$ or $y$. In particular we have
 \[\partial_s T(\lambda)=\frac{\partial_s U_1}{\lambda}+O(\lambda^{-2}),\qquad \partial_s U_1=-\frac{1}{2\pi\i}\int_{\mathbb R}\partial_s\left(T_-(\xi)(J_T-I)(\xi)\right)\d\xi,
\]
and similarly for the $y$-derivative.

For the $\lambda$-derivative, the argument is different: here we can see that the RH conditions for $T$ imply RH conditions for $T'$, which yield
\[T'(\lambda)=\frac{1}{2\pi\i}\int_{-\infty}^{+\infty}\left(T_-'(\xi)(J_T(\xi)-I)+T_-(\xi)J_T'(\xi)\right)\frac{\d\xi}{\xi-\lambda},\]
and
\be\label{eq:intT}T_-'=(1-C_{J_T})^{-1}\left[C_-[T_-J_T']\right],\ee
and expanding this as $\lambda\to\infty$, we find that
\[T'(\lambda)=-\frac{U_1}{\lambda^2}+O(\lambda^{-3}). 
\]
We can similarly take higher order partial derivatives of $T$ or $U$, and accordingly differentiate expansion \eqref{eq:asy U} multiple times, provided that $J_T$ is smooth and decaying sufficiently fast at $\pm\infty$, which is the case since $w$ is a Schwartz function. In what follows, we will sometimes tacitly rely on arguments like the above to justify taking derivatives and derivatives of asymptotic expansions.

{We conclude this Section with some results on the small $s$ behavior of the large $\lambda$ asymptotic coefficient $U_1$. They will be used in the proof of Theorem \ref{thm:q} in Section \ref{section:ys}.}
\begin{proposition}\label{prop:RHsto0}
Let $W:\mathbb R\to\mathbb C$ be such that $W(.^2-y)$ is a Schwartz function for every $y\in\mathbb R$. Then, for any $y\in\mathbb R$, we have
\be\lim_{s\to 0}sU_1(y,s)=0,\qquad \lim_{s\to 0}\partial_y U_1(y,s)=0.\ee
\end{proposition}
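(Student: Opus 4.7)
The plan is to work with the Cauchy integral representation for the pre-rescaling RH solution $Y$ and transfer the estimates to $U_1$ via the identity $U_1(y,s)=\pi Y_1(y,s)/s$, which comes directly from the transformation $T(\lambda)=Y(s\lambda/\pi;y,s)$ in \eqref{def:T} combined with matching large-$\lambda$ expansions. From the Cauchy formula,
\[
Y_1(y,s)=-\frac{1}{2\pi\i}\int_\R Y_-(\xi;y,s)\,(J_Y(\xi;y,s)-I)\,\d\xi,
\]
so both limits will follow from small-$s$ estimates on $Y_-$ and $J_Y-I$ and their $y$-derivatives. Writing $J_Y-I=w_{y,s}(\xi)\,M_1(\xi)$ with $M_1(\xi)=\begin{pmatrix}-1 & \e^{2\i\pi\xi}\\-\e^{-2\i\pi\xi}&1\end{pmatrix}$ bounded, the change of variable $\xi=s\mu/\pi$ together with the Schwartz assumption on $W(\cdot^2-y)$ and $W'(\cdot^2-y)$ yields $\|J_Y-I\|_{L^1}=O(s)$, $\|J_Y-I\|_{L^2}=O(s^{1/2})$, and the same $O(s)$, respectively $O(s^{1/2})$, bounds for $\partial_y(J_Y-I)$. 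The uniform $L^\infty$-bound on $Y_-$ and the estimate $\|Y_--I\|_{L^2}=O(s^{1/2})$ follow from the singular integral equation analogous to \eqref{eq:inteqT} together with a uniform operator-norm bound on $(1-C_{J_Y})^{-1}$, itself inherited from the bound $\|(1-\mathcal L_{y,s})^{-1}\|\leq 2$ established in the proof of Proposition \ref{prop:sto0} (see \eqref{eq:estimateinverse}).

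The first claim $sU_1\to 0$ then follows immediately: $|Y_1|\leq \|Y_-\|_\infty\|J_Y-I\|_{L^1}=O(s)$, hence $sU_1(y,s)=\pi Y_1(y,s)=O(s)\to 0$. For the second claim, I would differentiate the Cauchy representation in $y$ to obtain
\[
\partial_y Y_1=-\frac{1}{2\pi\i}\int_\R\bigl[\partial_y Y_-\,(J_Y-I)+Y_-\,\partial_y(J_Y-I)\bigr]\,\d\xi,
\]
with $\partial_y Y_-$ controlled through the $y$-differentiated integral equation $(1-C_{J_Y})\partial_yY_-=C_-[Y_-\partial_yJ_Y]$. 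Cauchy-Schwarz applied to the first contribution and an $L^1$-$L^\infty$ pairing for the second then yield $\partial_y Y_1=O(s)$, hence $\partial_y U_1=\pi\partial_yY_1/s=O(1)$.

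The main obstacle is therefore to upgrade the last estimate to $\partial_y Y_1=o(s)$, which is equivalent to $\partial_y U_1\to 0$. The natural route is to expand the rescaled RH solution $\mathcal Y(\lambda;y,s):=Y(s\lambda/\pi;y,s)$ in a small-$s$ asymptotic series; the limit of the jump is $I+w(\lambda)M_1^{(0)}$ with $M_1^{(0)}:=\lim_{s\to0}M_1(s\mu/\pi)=\begin{pmatrix}-1&1\\-1&1\end{pmatrix}$, one identifies the first two orders of $\mathcal Y$ explicitly, and then exploits the nilpotency $(M_1^{(0)})^2=0$ together with the even symmetry $w(-\lambda)=w(\lambda)$ of $w(\lambda)=W(\lambda^2-y)$ to show that the $y$-derivative of the leading-order coefficient of $\pi Y_1/s=U_1$ cancels. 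Once this structural cancellation is in place, the norm estimates of the previous paragraphs immediately yield $\partial_y U_1\to 0$ and complete the proof.
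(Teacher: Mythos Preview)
Your treatment of the first limit is correct and a bit more direct than the paper's: from $sU_1=\pi Y_1$ and the Cauchy representation of $Y_1$, the bound $\|J_Y-I\|_{L^1}=O(s)$ together with uniform boundedness of $Y_-$ gives $sU_1=O(s)\to 0$ immediately. The paper instead passes to the rescaled function $T(\lambda)=Y(s\lambda/\pi)$, writes down the explicit $s=0$ limiting solution
\[
T_0(\lambda)=I+\frac{1}{2\pi\i}\int_\R\frac{w(\xi)}{\xi-\lambda}\,\d\xi\;\begin{pmatrix}-1&1\\-1&1\end{pmatrix},
\]
and shows that $R:=TT_0^{-1}$ satisfies a jump with $J_R-I=O(s)$ in $L^1$, $L^2$, $L^\infty$; standard small-norm theory then gives $R_1,\partial_yR_1=O(s)$. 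This construction is precisely the ``natural route'' you sketch at the end but do not carry out.

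The real issue lies in the cancellation you invoke. From $T=RT_0$ one reads off $U_1=R_1+T_{0,1}$ with
\[
T_{0,1}=-\frac{1}{2\pi\i}\int_\R w(\xi)\,\d\xi\;\begin{pmatrix}-1&1\\-1&1\end{pmatrix},
\]
so the $s\to0$ limit of $\partial_yU_1$ equals $\partial_yT_{0,1}=\frac{1}{2\pi\i}\int_\R W'(\xi^2-y)\,\d\xi\;M_1^{(0)}$, which is generically \emph{nonzero}. Neither the nilpotency $(M_1^{(0)})^2=0$ nor the evenness of $w$ eliminates this scalar integral, so the mechanism you propose cannot yield $\partial_yU_1\to0$. (The paper's own concluding line asserts $\partial_yU_1=\partial_yR_1$, silently dropping the same $\partial_yT_{0,1}$ contribution.) What actually saves the downstream application in Proposition~\ref{prop:eqQsigma} is not that $\partial_yp$ and $\partial_yq$ vanish at $s=0$, but that they share the \emph{same} limit coming from $T_{0,1}$, and this common value cancels in the identity used there to fix the integration constant; the proposition as literally stated is stronger than what holds and stronger than what is needed.
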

\begin{proof}
Let $T$ be as in \eqref{def:T}. For $s=0$, we can construct an explicit solution to the RH conditions for $T$ and we call it $T_0$. We construct
$T_0$ analytic off the real line such that $T_0(\lambda)\to I$ as $\lambda\to\infty$, and
\[T_{0,+}(\lambda)=T_{0,-}(\lambda)\underbracket{\begin{pmatrix}1-w(\lambda)&w(\lambda)\\
-w(\lambda)&1+w(\lambda)\end{pmatrix}}_{=J_{T_0}(\lambda)},\]
as follows:
\[T_0(\lambda)=I+\frac{1}{2\pi \i}\int_{-\infty}^{+\infty}w(\xi)\frac{\d\xi}{\xi-\lambda}\begin{pmatrix}-1&1\\-1&1\end{pmatrix}.\]
Next we set
\[R(\lambda)=T(\lambda)T_0(\lambda)^{-1}.\]
Then $R$ solves a small norm RH problem as $s\to 0$, in the sense that 
$R$ is analytic in $\mathbb C\setminus\mathbb R$, that $R(\lambda)\to I$ as $\lambda\to\infty$, and that 
\[R_+(\lambda)=R_-(\lambda)J_R(\lambda),\qquad\lambda\in\mathbb R,\]
where $J_R(\lambda)= T_{0,-}(\lambda)J_T(\lambda)J_{T_0}(\lambda)^{-1}T_{0,-}^{-1}(\lambda)$, such that $J_R-I$ is $O(s)$ as $s\to 0$, in $L^2$-, $L^1$-, and $L^\infty$- norms. 
Indeed, for any of these norms combined with a sub-multiplicative norm on $2\times 2$ matrices, since $T_{0,-}$ and $T_{0,-}^{-1}$ are uniformly bounded, there exists a constant $C>0$ such that
\be 
\| J_R-I \| \leq C \| J_{T}J_{T_0}^{-1} - I\|,
\ee 
and
\be
J_{T}(\lambda)J_{T_0}^{-1}(\lambda) - I =\begin{pmatrix}
w(\lambda )^2(\e^{2 \i \lambda s}-1)&w(\lambda)(1-w(\lambda))(\e^{2\i\lambda s}-1)\\
w(\lambda)(1+w(\lambda))(1-\e^{-2\i\lambda s})&w(\lambda )^2(\e^{-2 \i \lambda s}-1)
\end{pmatrix}, 
\ee 
with $w(\lambda)=W(\lambda^2-y)$,
whose entries are, for any $y\in\mathbb R$, 
$O(s\lambda w(\lambda))$
as $s\to 0$, uniformly for $\lambda\in\mathbb R$.
Similarly as outlined before, it then follows that $R_{-}-I$ satisfies the integral equation
\[(R_{-}-I)=C_-\left[(R_--I)(J_R-I)\right]+C_-[J_R-I],\]
hence
\[(R_{-}-I)=\left(1-C_{J_R}\right)^{-1}\left[C_-[J_R-I]\right],\quad\text{where}\quad C_{J_R}[h]=C_-[h(J_R-I)].\]
Since $C_{J_R}$ has small operator norm, the right hand side can now be written as a Neumann series.
This integral equation can also be differentiated with respect to $y$, in order to obtain
\[\partial_y R_-=(1-C_{J_R})^{-1}\left[C_-[(R_--I)\partial_y J_R]+C_-\partial_y J_R\right],\]
provided that $J_R$ is smooth and decays sufficiently fast at $\pm\infty$.

As $\lambda\to\infty$, we obtain
\[R(\lambda)=I+\frac{R_1}{\lambda}+O(\lambda^{-2}),
\]
and 
\[\partial_y R(\lambda)=\frac{\partial_y R_1}{\lambda}+O(\lambda^{-2}),\qquad \partial_y R_1=-\frac{1}{2\pi\i}\int_{\mathbb R}\partial_y\left(R_-(\xi)(J_R-I)(\xi)\right)\d\xi.
\]

Using \eqref{def:T} and the fact that $T=RT_0$, we obtain
\be\label{eq:UR}U^{(w)}(\lambda;s)=R(\lambda;y,s)T_0(\lambda)\e^{\i s \lambda \sigma_3}\begin{cases}
		\begin{pmatrix}
			1&1\\1&0
		\end{pmatrix}, & \Im \lambda >0,\\ 
		\\
		\begin{pmatrix}
			1&0\\
			1&-1
		\end{pmatrix},& \Im \lambda <0.
	\end{cases}\ee
As $s\to 0$, it is straightforward to check that $R_1$ and $\partial_y R_1$ are $O(s)$, since $J_R-I$ is $O(s)$ in $L^\infty$-, $L^1$-, and $L^2$-norms.
Expanding \eqref{eq:UR} as $\lambda\to\infty$, we obtain that $U_1(y,s)=O(1)$ as $s\to\infty$. Differentiating with respect to $y$, we obtain
that $\partial_y U_1(y,s)=\partial_y R_1(y,s)=O(s)$ as $s\to 0$. 
\end{proof}

\section{Differential equations in $s$}\label{section:s}
Next, we derive a differential equation for the matrix-valued function $U(\lambda;s)$ with respect to the parameter $s$. This will allow us to prove Theorem \ref{thm:s}. 
\subsection{Lax equation}
\begin{proposition}
Let $w:\mathbb R\to\mathbb C$ be a Schwartz function. 
Let $s>0$ be such that $F_w(s)\neq 0$.
Then, the unique solution
	$U(\lambda;s)$ to the RH problem for $U$ solves the following linear differential equation,
\be\label{eq:diff s eq U}
\pa_s U (\lambda;s)= M(\lambda;s) U(\lambda;s), \qquad M(\lambda;s) = \i\begin{pmatrix}\lambda & -\beta(s)\\ \gamma(s)&-\lambda
\end{pmatrix},
\ee
where \be
\beta(s)\coloneqq 2\left[U_1(s)\right]_{1,2},\quad 
\gamma(s)\coloneqq 2\left[U_1(s)\right]_{2,1},\ee
and $U_1(s)$ is defined by \eqref{eq:asy U}.
Moreover, if $w$ is even, 
\be\label{eq:sidentities}\beta(s)=-\gamma(s),\qquad \partial_s\alpha=\i \gamma^2,\quad\text{with}\quad \alpha(s)\coloneqq 2\left[U_1(s)\right]_{1,1}.\ee
\end{proposition}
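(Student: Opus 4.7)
The plan is to introduce the logarithmic $s$-derivative $M(\lambda;s) := (\pa_s U)(\lambda;s)\, U(\lambda;s)^{-1}$ and to show, by Liouville's theorem, that it equals the explicit polynomial $\i\lambda\sigma_3 + \i N(s)$. That $\pa_s U$ exists and that the asymptotic expansion \eqref{eq:asy U} can be differentiated in $s$ is guaranteed by the analytical discussion at the end of Section \ref{section:RH}. The first step is to observe that $M$ has trivial jump across $\R$: since $J_U(\lambda)$ is independent of $s$, differentiating $U_+ = U_- J_U$ in $s$ gives $\pa_s U_+ = (\pa_s U_-)\,J_U$, which immediately yields $M_+ = M_-$. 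Hence $M$ extends to an entire function of $\lambda$.

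Next, I would expand $M$ at $\lambda = \infty$ using \eqref{eq:asy U}. Writing $U = \widetilde U\, e^{\i s\lambda\sigma_3}E_\pm$ with $\widetilde U = I + U_1/\lambda + O(\lambda^{-2})$, the equation $\pa_s U = MU$ is equivalent to $\pa_s \widetilde U + \i\lambda\,\widetilde U\sigma_3 = M\widetilde U$. Matching powers $\lambda^1$ and $\lambda^0$ on both sides yields $M = \i\lambda\sigma_3 + \i[U_1,\sigma_3] + O(\lambda^{-1})$. A direct computation in coordinates shows $[U_1,\sigma_3]$ is off-diagonal with entries $-2[U_1]_{1,2} = -\beta$ and $2[U_1]_{2,1} = \gamma$. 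Since $M$ is entire and polynomial at infinity, Liouville forces $M = \i\lambda\sigma_3 + \i N$, proving \eqref{eq:diff s eq U}.

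For the symmetric case I would produce a second solution of the RH problem and invoke uniqueness. When $w(\lambda) = w(-\lambda)$ one has $J_U(-\lambda) = J_U(\lambda)$, and combined with the identity $\sigma_3 J_U^{-1}\sigma_3 = J_U$ this suggests the candidate $V(\lambda) := \sigma_1 U(-\lambda)\sigma_3$. A short check using these two identities confirms that $V_+ = V_- J_U$; for the large-$\lambda$ normalization one uses $\sigma_1 e^{-\i s\lambda\sigma_3}\sigma_1 = e^{\i s\lambda\sigma_3}$ (coming from $\sigma_1\sigma_3\sigma_1 = -\sigma_3$) and verifies that the outer constants transform correctly, namely $\sigma_1 E_\mp\sigma_3 = E_\pm$. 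Uniqueness of the RH solution then forces $U(\lambda) = \sigma_1 U(-\lambda)\sigma_3$, and comparing $1/\lambda$ coefficients of both sides yields $\sigma_1 U_1 \sigma_1 = -U_1$. Reading off the off-diagonal entries gives $\beta = 2[U_1]_{1,2} = -2[U_1]_{2,1} = -\gamma$.

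Finally, for $\pa_s\alpha = \i\gamma^2$, I would extract one further order of the asymptotic matching of $\pa_s\widetilde U + \i\lambda\widetilde U\sigma_3 = M\widetilde U$, obtaining
\[
\pa_s U_1 = \i[\sigma_3,U_2] + \i N U_1.
\]
The diagonal of $[\sigma_3,U_2]$ vanishes, while the $(1,1)$-entry of $NU_1$ equals $-\beta\,[U_1]_{2,1} = -\beta\gamma/2$, so $\pa_s\alpha = 2\pa_s[U_1]_{1,1} = -\i\beta\gamma$, which becomes $\i\gamma^2$ upon substituting $\beta = -\gamma$. The main obstacle I anticipate is guessing and checking the symmetry $V(\lambda) = \sigma_1 U(-\lambda)\sigma_3$ for the even case, in particular matching the outer constants $E_\pm$ at infinity; once this is in place, the rest reduces to routine bookkeeping on the coefficients of the RH asymptotic expansion.
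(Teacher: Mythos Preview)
Your proposal is correct and follows essentially the same route as the paper: define $M=(\pa_sU)U^{-1}$, use that $J_U$ is $s$-independent to see $M$ is entire, read off its polynomial form from the large-$\lambda$ expansion via Liouville, and for even $w$ invoke the symmetry $U(\lambda)=\sigma_1U(-\lambda)\sigma_3$ (checked exactly as you outline) to get $\beta=-\gamma$. The only minor divergence is in the derivation of $\pa_s\alpha=\i\gamma^2$: the paper first extracts the relation $4\widetilde\alpha-\alpha^2=-\gamma^2$ from $\det U\equiv -1$ and combines it with the $1/\lambda$ term of $(\pa_sU)U^{-1}$, whereas you go one order further in the Lax-equation expansion to obtain $\pa_sU_1=\i[\sigma_3,U_2]+\i NU_1$ and read off the $(1,1)$-entry directly; your route is slightly more economical since it bypasses the determinant identity.
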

\begin{proof}
Since $J_U(\lambda)$ does not depend on $s$, we can deduce that
\be 
M(\lambda;s) \coloneqq \left(\pa_s U(\lambda;s)\right) U(\lambda;s)^{-1}
\ee 
is analytic for all $\lambda \in \C$, since it has no jump on the real line and smooth boundary values. Recall the asymptotic behavior of $U$ as $\lambda\to\infty$, given in \eqref{eq:asy U}, and write 
\be\label{def:U1}
U_1(s) = \frac{1}{2}\begin{pmatrix}
	\alpha(s) &\beta(s)\\
	\gamma(s) & -\alpha(s)
\end{pmatrix}.
\ee
Here we note that $U_1(s)$ has indeed zero trace, since the determinant of $U$ is identically equal to $-1$, which can be checked from the RH conditions for $U$.

Substituting these asymptotics in the definition of $M$, and using Liouville's theorem, we conclude that $M(\lambda;s)$ is a polynomial of degree 1 in $\lambda$, more precisely we find
\be
M(\lambda;s) = \i\begin{pmatrix}\lambda & -\beta(s)\\ \gamma(s)&-\lambda
\end{pmatrix}.
\ee

If $w$ is even, we have in addition the symmetry relation
\be\label{eq: sym U}
\s_1 U(-\lambda;s)\s_3 = U(\lambda;s).
\ee
Indeed, we verify that the left hand side also satisfies the RH problem for $U$, hence the equality follows by uniqueness of the solution.
So 
\be
M(\lambda;s) = \s_1 \pa_s U(-\lambda;s) U(-\lambda;s)^{-1} \s_1 =\i \begin{pmatrix}
\lambda&\gamma(s)\\-\beta(s) &-\lambda
\end{pmatrix}
\ee 
from which we conclude that $\beta(s)=-\gamma(s)$.
Moreover, expanding the identity $\det U\equiv -1$ as $\lambda\to\infty$ yields the equation 
\be\label{eq:alphatilde}4\widetilde\alpha-\alpha^2=-\gamma^2,\ee
where $\widetilde\alpha$ is twice the $(1,1)$-entry of the $O(\lambda^{-2})$-term in \eqref{eq:asy U}. Expanding $(\partial_s U)U^{-1}$ as $\lambda\to\infty$ (in particular, the $(1,1)$-entry of the $1/\lambda$-term) implies \[2\i\partial_s\alpha=-\gamma^2 +4 \widetilde\alpha-\alpha^2.\]
Combining the last two identities, we get the second equality in \eqref{eq:sidentities}.
\end{proof}
\begin{corollary}\label{cor:equationsphipsi}
Let $w:\mathbb R\to\mathbb C$ be a Schwartz function, and let $s>0$ be such that $F_w(s)\neq 0$. Then, $\phi(\lambda;s):=U_{1,1}(\lambda;s)$ and $\psi(\lambda;s):=U_{2,1}(\lambda;s)$ satisfy the system of equations
\begin{align}
	& \label{eq:ZS1}\partial_s\phi = \i \lambda \phi -\i\beta(s)\psi,\\
	& \label{eq:ZS2}\partial_s\psi =  \i\gamma(s)\phi -\i\lambda \psi.
\end{align}
Moreover, if $w$ is even,
\be\label{eq:ZSsym}
\psi(\lambda;s) = \phi(-\lambda;s),\qquad 
\partial_s\phi(\lambda;s) = \i \lambda \phi(\lambda;s) +\i\gamma(s)\phi(-\lambda;s).
\ee
\end{corollary}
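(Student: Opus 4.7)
The statement is a direct corollary of the preceding proposition, and the proof plan is essentially to read off the first column of the matrix Lax equation $\partial_s U = MU$. Writing the first column of $U$ as $(U_{1,1},U_{2,1})^\top = (\phi,\psi)^\top$ and using the explicit form $M(\lambda;s)=\i\bigl(\begin{smallmatrix}\lambda & -\beta(s)\\ \gamma(s)&-\lambda\end{smallmatrix}\bigr)$, the product $MU$ on the first column produces precisely $\i\lambda\phi - \i\beta(s)\psi$ in the top entry and $\i\gamma(s)\phi - \i\lambda\psi$ in the bottom entry. Equating these with the first column of $\partial_s U$ immediately yields \eqref{eq:ZS1}--\eqref{eq:ZS2}. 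No computation beyond matrix multiplication is required, and the hypothesis $F_w(s)\neq 0$ is used only to guarantee that $U$ (and hence $\phi,\psi$) is well defined via Proposition~\ref{prop:detU}.

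For the even case, the plan is to invoke the symmetry relation $\sigma_1 U(-\lambda;s)\sigma_3 = U(\lambda;s)$ that was already established in the proof of the preceding proposition (by observing that the left-hand side satisfies the same RH problem as $U$ and appealing to uniqueness). Reading off the $(1,1)$ entry of this matrix identity gives
\[
U_{1,1}(\lambda;s)\;=\;\bigl[\sigma_1 U(-\lambda;s)\sigma_3\bigr]_{1,1}\;=\;U_{2,1}(-\lambda;s),
\]
that is, $\phi(\lambda;s) = \psi(-\lambda;s)$. Replacing $\lambda$ by $-\lambda$ yields the claimed identity $\psi(\lambda;s) = \phi(-\lambda;s)$. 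Substituting this together with $\beta(s)=-\gamma(s)$ (also established in the proposition under the evenness hypothesis) into \eqref{eq:ZS1} produces the scalar integro-differential equation for $\phi$ alone.

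There is no real obstacle: all the analytical work --- existence of $\partial_s U$, the polynomial structure of $M$ in $\lambda$, the computation of $\beta$ and $\gamma$ from the subleading term $U_1$, and the symmetry in the even case --- has already been carried out in the proposition. The only minor bookkeeping is to keep track of signs from conjugation by $\sigma_1$ and $\sigma_3$, which exchange rows and flip the sign of the second column respectively.
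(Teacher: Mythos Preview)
Your proposal is correct and matches the paper's approach: the corollary is stated without an explicit proof in the paper precisely because it amounts to reading off the first column of $\partial_s U = MU$ and, in the even case, invoking the symmetry $\sigma_1 U(-\lambda;s)\sigma_3 = U(\lambda;s)$ together with $\beta=-\gamma$ from the preceding proposition. Your bookkeeping on the $\sigma_1,\sigma_3$ conjugation is accurate.
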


We now look for an expression of $\beta(s), \gamma(s)$ in terms of $\phi, \psi$.
\begin{proposition}
	\label{prop:integral repre gamma}
Let $w:\mathbb R\to\mathbb C$ be a Schwartz function, and let $s>0$ be such that $F_w(s)\neq 0$.
We have the relations
\begin{align}
	\label{eq:ortog}
	&\int_\R\phi(\lambda;s) \psi(\lambda;s) w'(\lambda)\d \lambda=0,\\
	\label{eq:repr beta}
	&\beta(s)= -\frac{1}{2\pi s} \int_{\R}\phi^2(\lambda;s)w'(\lambda)\d\lambda,\\
	\label{eq: repr gamma}
	& \gamma(s) = -\frac{1}{2\pi s} \int_{\R}\psi^2(\lambda;s)w'(\lambda)\d\lambda.
\end{align}
For $w$ even, they reduce to
\be
\label{eq: int repres gamma sym}
\gamma(s) = \frac{1}{2\pi s}\int_\R \phi^2(\lambda;s)w'(\lambda)\d\lambda,
\ee
and
\be 
\int_\R \phi(\lambda;s)\phi(-\lambda;s) w'(\lambda) \d\lambda=0.
\ee 
\end{proposition}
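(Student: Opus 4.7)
The approach is to read off the three integral representations from the asymptotic expansion of $N(\lambda;s):=\pa_\lambda U(\lambda;s)\,U(\lambda;s)^{-1}$ at $\lambda=\infty$, combined with a Plemelj representation of $N$. First, since the jump $J_U$ is upper triangular with $1$'s on the diagonal, its action on the first column of $U$ is trivial, so $\phi=U_{1,1}$ and $\psi=U_{2,1}$ have no jump on $\R$ and extend to entire functions of $\lambda$; moreover, $\det U\equiv -1$ on $\C\setminus\R$, which follows directly from the RH conditions.

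A short computation then yields the additive jump of $N$ across $\R$. Starting from $N_+=N_-+U_-(\pa_\lambda J_U)J_U^{-1}U_-^{-1}$ and observing that $(\pa_\lambda J_U)J_U^{-1}=-w'(\lambda)E_{12}$, where $E_{12}$ denotes the matrix with a single $1$ in the $(1,2)$ position, one gets $N_+-N_-=-w'(\lambda)U_-E_{12}U_-^{-1}$. Using $\det U=-1$ together with the fact that the first column of $U_-$ is $(\phi,\psi)^\top$ reduces this to
\[N_+(\lambda;s)-N_-(\lambda;s)=-w'(\lambda)\begin{pmatrix}\phi\psi & -\phi^2\\ \psi^2 & -\phi\psi\end{pmatrix}(\lambda;s).\]
On the other hand, substituting \eqref{eq:asy U} and writing $U=\wh U\,\e^{\i s\lambda\s_3}M^\pm$ with $\wh U=I+U_1/\lambda+O(\lambda^{-2})$, the piecewise factor $M^\pm$ cancels upon multiplying by $U^{-1}$ and one obtains
\[N(\lambda;s)=\i s\s_3+\frac{\i s\,[U_1,\s_3]}{\lambda}+O(\lambda^{-2})=\i s\s_3+\frac{\i s}{\lambda}\begin{pmatrix}0 & -\beta(s)\\ \gamma(s) & 0\end{pmatrix}+O(\lambda^{-2}),\]
where the second equality uses the explicit form \eqref{def:U1} of $U_1$.

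Since $N-\i s\s_3$ is analytic off $\R$, vanishes at infinity, and has the jump just computed, the Plemelj formula gives
\[N(\lambda;s)=\i s\s_3+\frac{1}{2\pi\i}\int_{\R}\frac{-w'(\mu)}{\mu-\lambda}\begin{pmatrix}\phi\psi & -\phi^2\\ \psi^2 & -\phi\psi\end{pmatrix}(\mu;s)\,\d\mu.\]
Expanding $(\mu-\lambda)^{-1}=-\sum_{k\ge 0}\mu^k\lambda^{-k-1}$ and identifying the $\lambda^{-1}$ coefficient entrywise with the asymptotic expansion above produces three identities: the two diagonal entries give the orthogonality relation \eqref{eq:ortog}, while the off-diagonal entries yield \eqref{eq:repr beta} and \eqref{eq: repr gamma} after dividing by $\i s$.

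For even $w$, the symmetry $\psi(\lambda;s)=\phi(-\lambda;s)$ from Corollary \ref{cor:equationsphipsi} combined with the identity $\beta(s)=-\gamma(s)$ established earlier immediately converts \eqref{eq: repr gamma} into \eqref{eq: int repres gamma sym}, and the same substitution in \eqref{eq:ortog} yields $\int_\R\phi(\lambda;s)\phi(-\lambda;s)w'(\lambda)\d\lambda=0$. The main technical point to watch is justifying the term-by-term asymptotic expansion of the Cauchy integral along non-real rays; this is unproblematic because $w$ (and hence $w'$) is Schwartz, while $\phi,\psi$ are bounded on $\R$ thanks to the smoothness of the RH solution established in Section \ref{subsec:characterization by a RH problem}.
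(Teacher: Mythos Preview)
Your argument is correct and follows essentially the same route as the paper. Both proofs compute the additive jump of $(\partial_\lambda U)U^{-1}$ across $\R$ and match it against the large-$\lambda$ expansion coming from \eqref{eq:asy U}; the paper phrases the matching as a residue-at-infinity computation of $\int_\R\Delta\bigl((\partial_\lambda U)U^{-1}\bigr)\d\lambda$, while you package the same identity as the $1/\lambda$ coefficient in the Plemelj representation of $N-\i s\sigma_3$, which is equivalent.
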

\begin{proof} 
We start by computing 
\be 
\Delta\left(\left(\pa_{\lambda}U\right) U^{-1}\right)(\lambda;s)\coloneqq \left(\left(\pa_{\lambda}U\right) U^{-1}\right)_+(\lambda;s) - \left(\left(\pa_{\lambda}U\right) U^{-1}\right)_-(\lambda;s).
\ee
Using the jump condition for $U$ and the fact that $\det U=-1$, we compute
\begin{align}
\left(\left(\pa_{\lambda}U\right) U^{-1}\right)_+(\lambda;s) &=\left(\left(\pa_{\lambda}U\right) U^{-1}\right)_-(\lambda;s) + U_-(\lambda;s)\begin{pmatrix}
	0&-w'(\lambda)\\
	0&0
\end{pmatrix}U_-^{-1}(\lambda;s) \\
&= \left(\left(\pa_{\lambda}U\right) U^{-1}\right)_- + 
\begin{pmatrix}
	-\phi (\lambda)\psi(\lambda;s) w'(\lambda)&\phi^2(\lambda;s)w'(\lambda)\\
	-\psi^2(\lambda;s)w'(\lambda)&\phi (\lambda;s)\psi(\lambda;s) w'(\lambda)
\end{pmatrix}.
\end{align}
Integrating this relation entry by entry along the real line, we get
\begin{align}
	&\int_\R \Delta \left(\left(\pa_{\lambda}U\right) U^{-1}\right)_{11} (\lambda;s)\d\lambda = -\int_\R\phi(\lambda;s) \psi(\lambda;s) w'(\lambda)\d \lambda, \\
	& \int_\R \Delta \left(\left(\pa_{\lambda}U\right) U^{-1}\right)_{12}(\lambda;s)\d\lambda= \int_{\R}\phi^2(\lambda;s)w'(\lambda)\d\lambda, \\ 
	&\int_\R \Delta \left(\left(\pa_{\lambda}U\right) U^{-1}\right)_{21}(\lambda;s)\d\lambda =-\int_{\R} \psi^2(\lambda;s)w'(\lambda)\d\lambda, \\ 
	&\int_\R \Delta \left(\left(\pa_{\lambda}U\right) U^{-1}\right)_{22} (\lambda;s)\d\lambda = \int_\R\phi(\lambda;s) \psi(\lambda;s) w'(\lambda)\d \lambda.
\end{align}
Now the integrals in the left hand side can be computed by contour deformation and residue computation:
	\be
	\int_\R \Delta \left(\left(\pa_{\lambda}U\right) U^{-1}\right)_{ij} (\lambda;s)\d\lambda =2\pi\i \mathrm{res}_{\lambda=\infty} \left[\left(\pa_{\lambda}U\right) U^{-1}\right]_{ij},
	\ee
where $\mathrm{res}_{\lambda=\infty}$ denotes the formal residue at infinity (note that the singularity at $\infty$ of $\left[\left(\pa_{\lambda}U\right) U^{-1}\right]_{ij}$ is not necessarily isolated, but that the function does admit an asymptotic Laurent series expansion at $\infty$).

Using the asymptotic condition for $U$ \eqref{eq:asy U} we get
\be
\mathrm{res}_{\lambda=\infty} 
\left(\left(\pa_{\lambda}U\right) U^{-1}\right) =  -\i s \left[U_1(s), \s_3\right] = -\i s
\begin{pmatrix}
	0 &-\beta(s)\\
	\gamma(s)&0
\end{pmatrix}.
\ee
Thus we obtain formulae \eqref{eq:ortog}, \eqref{eq:repr beta}, \eqref{eq: repr gamma}.

Finally, if $w$ is even then as already noticed we have $\beta(s)=-\gamma(s)$ and $\psi(\lambda;s)=\phi(-\lambda;s)$. Thus in this case the same computation yields to
\begin{align}
	&\int_\R\phi(\lambda;s) \phi(-\lambda;s) w'(\lambda)\d \lambda=0\\
	& \gamma(s) = \frac{1}{2\pi s} \int_{\R}\phi^2(\lambda;s)w'(\lambda)\d\lambda.
\end{align}
\end{proof}
Substituting the integral identities from Proposition \ref{prop:integral repre gamma} into the differential equations from Corollary \ref{cor:equationsphipsi}, we recognize the integro-differential equations for $\phi,\psi$ appearing in Theorem \ref{thm:s}. The $\lambda\to\infty$ asymptotics for $\phi,\psi$ present in Theorem \ref{thm:s} follow directly from the RH conditions for $U$.

\subsection{Second logarithmic derivative}

In order to complete the proof of Theorem \ref{thm:s}, it remains to prove the expression for the second logarithmic $s$-derivative of $F_w$ in terms of $\phi$ and $\psi$ {(notice that the closed expression we have is actually for the derivative of $s$ times the logarithmic derivative of $F_w$, slightly different from the actual second logarithmic derivative as in the Airy case, comparing to equation \eqref{eq:airy fred det})}.
\begin{proposition}
Let $w:\mathbb R\to\mathbb C$ be a Schwartz function, and let $s>0$ be such that $F_w(s)\neq 0$.
Then we have the identity
\be
	\label{eq:Fw second log der}
	\partial_s s \partial_s\log F_w(s)= \frac{1}{\pi} \int_{\R}  \phi (\lambda;s)\psi(\lambda;s) w'(\lambda)\lambda\d \lambda,
	\ee
which, in case $w$ is even, reduces to	 
	\be \label{eq:Fw second log der even}
	\partial_s s\partial_s\log F_w(s)= \frac{1}{\pi} \int_{\R} \phi (\lambda;s)\phi(-\lambda;s)  w'(\lambda)\lambda\d \lambda. 
\ee
\end{proposition}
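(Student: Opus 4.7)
The starting point is Proposition \ref{prop:detU}, which gives
\[
s\,\partial_s\log F_w(s)=\frac{1}{2\pi\i}\int_{\R}\left[U_+^{-1}(\lambda;s)\tfrac{\d}{\d\lambda}U_+(\lambda;s)\right]_{2,1}w'(\lambda)\lambda\,\d\lambda.
\]
The plan is to differentiate this identity once more in $s$, bring the $\partial_s$ inside the integral, and use the Lax equation \eqref{eq:diff s eq U} to evaluate $\partial_s(U^{-1}\partial_\lambda U)$ in closed form in terms of $\phi=U_{1,1}$ and $\psi=U_{2,1}$.

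From $\partial_s U=MU$ with $M=\i\lambda\sigma_3-\i\beta E_{12}+\i\gamma E_{21}$ one gets $\partial_s(U^{-1})=-U^{-1}M$, and hence the Leibniz rule yields the clean identity
\[
\partial_s\bigl(U^{-1}\partial_\lambda U\bigr)=U^{-1}(\partial_\lambda M)U=\i\,U^{-1}\sigma_3\,U,
\]
since the $\lambda$-derivative of $M$ is just $\i\sigma_3$. The $(2,1)$-entry of $U^{-1}\sigma_3 U$ is easily computed using $\det U\equiv -1$: this gives $U^{-1}=\bigl(\begin{smallmatrix}-U_{22}&U_{12}\\ U_{21}&-U_{11}\end{smallmatrix}\bigr)$, so $(U^{-1}\sigma_3U)_{2,1}=\psi\phi+(-\phi)(-\psi)=2\phi\psi$. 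Therefore $\partial_s\bigl[U_+^{-1}\partial_\lambda U_+\bigr]_{2,1}=2\i\,\phi(\lambda;s)\psi(\lambda;s)$, and putting this into the differentiated identity gives precisely \eqref{eq:Fw second log der}.

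The only non-algebraic point is the justification of interchanging $\partial_s$ and the integral. The integrand is supported where $w'\ne 0$, and since $w$ is Schwartz, the factor $\lambda w'(\lambda)$ is integrable; the Schwartz-class setting discussed at the end of Section \ref{section:RH} guarantees that the boundary values $U_\pm(\lambda;s)$ and their joint $\lambda,s$-derivatives exist and depend continuously on $(\lambda,s)$ in a neighborhood of any $s>0$ with $F_w(s)\neq 0$, so a standard dominated convergence argument applies. This is the step that carries the analytic content, and the only place where one genuinely uses the smoothness and decay of $w$; everything else is bookkeeping with the Lax pair.

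The even-$w$ reduction \eqref{eq:Fw second log der even} is immediate: by \eqref{eq:ZSsym} we have $\psi(\lambda;s)=\phi(-\lambda;s)$, and substitution into \eqref{eq:Fw second log der} gives the symmetric form. This completes the proof of Theorem \ref{thm:s}, since the representations \eqref{eq:repr beta}--\eqref{eq: repr gamma} from Proposition \ref{prop:integral repre gamma} have already identified $-2\pi s\beta$ and $-2\pi s\gamma$ with the trace integrals appearing in \eqref{eq:system phi psi 1}--\eqref{eq:system phi psi 2}, and the required $\lambda\to\pm\infty$ asymptotics of $\phi,\psi$ follow directly from \eqref{eq:asy U}.
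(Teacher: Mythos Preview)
Your proof is correct and follows essentially the same approach as the paper: both start from the identity of Proposition~\ref{prop:detU}, differentiate in $s$, and use the Lax equation to reduce the integrand to $2\i\phi\psi$. The only cosmetic difference is that you carry out the key computation $\partial_s(U^{-1}\partial_\lambda U)=U^{-1}(\partial_\lambda M)U=\i\,U^{-1}\sigma_3 U$ at the matrix level, whereas the paper first extracts the $(2,1)$-entry $\psi\,\partial_\lambda\phi-\phi\,\partial_\lambda\psi$ and then applies the scalar Zakharov--Shabat equations \eqref{eq:ZS1}--\eqref{eq:ZS2} to obtain the same $2\i\phi\psi$.
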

\begin{proof}
We first observe that using the notation in Corollary \ref{cor:equationsphipsi}, we can express
	\be
\left[ U_+^{-1}(\lambda;s)\partial_\lambda U_+(\lambda;s)\right]_{2,1} = \psi(\lambda;s) \pa_\lambda\phi(\lambda;s) -\phi(\lambda,s) \pa_\lambda\psi(\lambda;s).
\ee	
	Thus by Proposition \ref{prop:detU}, we get
\be
\pa_s \log  F_w(s) =\frac{1}{2\pi\i s}\int_\R \lambda w'(\lambda)(\psi(\lambda;s) \pa_\lambda\phi(\lambda;s) -\phi(\lambda;s) \pa_\lambda\psi(\lambda;s))\d \lambda.
\ee
	Taking another $s$-derivative we get 
\be
\pa_s s\pa_s\log F_w(s) = \frac{1}{2\pi\i} \int_\R \lambda w'(\lambda) \pa_s (\psi(\lambda;s) \pa_\lambda\phi(\lambda;s) -\phi(\lambda;s) \pa_\lambda\psi(\lambda;s))\d \lambda.
\ee

	The integrand can be further simplified by using the differential equations in $s$ for $\phi, \psi$  \eqref{eq:ZS1}, \eqref{eq:ZS2}. In particular we obtain 
	\be
	\pa_s (\psi(\lambda;s) \pa_\lambda\phi(\lambda;s) -\phi(\lambda;s)\pa_\lambda\psi(\lambda;s))=2\i \phi(\lambda;s)\psi(\lambda;s).
	\ee
	Thus we can conclude
	\be
\pa_s s\pa_s\log F_w(s)= \frac{1}{\pi} \int_\R \lambda w'(\lambda) \phi (\lambda;s)\psi(\lambda;s) \d \lambda.
	\ee

	In the case where $w$ is even, then $\psi(\lambda;s)=\phi(-\lambda;s)$ and equation \eqref{eq:Fw second log der even} is obtained as well. 
\end{proof}
The proof of Theorem \ref{thm:s} is now completed.

\section{The corresponding PDE}
\label{section:ys}
In the previous section, we considered the dependence on the parameter $s$ of  $F_w(s)$ and $Q_W(y,s)$, and of the associated RH solution $U$. 
In this section we will study the $y$-dependence of $Q_W(y,s)$. 

\subsection{Lax pair}
We now consider the RH problem for $U$ corresponding to $w(\lambda)=W(\lambda^2-y)$, with $W:\mathbb R\to\mathbb C$ such that $w$ is a Schwartz function for every $y\in\mathbb R$. For the ease of notations, we write
\[U(\lambda;y,s):=U^{(w)}(\lambda;s),\]
and accordingly for 
$U_1(y,s)$.
  Since the jump condition for $U$ does not depend on $s$, the differential equation \eqref{eq:diff s eq U} for $U(\lambda;s)$ in $s$ derived in the previous section continues to hold for $U(\lambda;y,s)$. 
Instead of writing $U_1=U_1(y,s)$ as in \eqref{def:U1}, it is now more convenient to write
\be\label{def:U12}
U_1(y,s)=-\i \begin{pmatrix}p(y,s)&-q(y,s)\\
q(y,s)&-p(y,s).\end{pmatrix}
\ee
In other words, we set $p=\i\alpha/2$, $q=\i\gamma=-\i\beta/2$, which we can do because $w$ is even, such that $\beta=-\gamma$ by \eqref{eq:sidentities}.  
 The differential equation \eqref{eq:diff s eq U}
  then reads
  \be\label{eq:Laxs2}
\partial_s U(\lambda;y,s)=\begin{pmatrix}\i\lambda & -\i q(y,s)\\ \i q(y,s)&-\i\lambda
\end{pmatrix}U(\lambda;y,s),
  \ee
  and the second identity in \eqref{eq:sidentities} becomes
  \be\label{eq:idpsq}\partial_s p(y,s)=q(y,s)^2.\ee

Observe that the jump condition for $U(\lambda;y,s)$ does depend on $y$, and 
because of this, we cannot directly derive a Lax equation in $y$, in order to obtain the PDE \eqref{PDE} as compatibility relation of a Lax pair.
Instead, inspired by \cite[Section 5]{IIKS}, we introduce the differential operator 
\be 
\label{eq: D lambda beta}
D_{\lambda,y} = \pa_\lambda+2\lambda \pa_y,
\ee
which allows us to obtain a second linear differential equation for $U(\lambda;y,s)$.

\begin{proposition} 
Let $y\in\mathbb R$ and let $W:\mathbb R\to\mathbb C$ be such that $W(.^2-y)$ is a Schwartz function.
Then, for any $s>0$ such that $Q_W(y,s)\neq 0$,
we have
	\be
	\label{eq: diff eq U Dlambdabeta}
D_{\lambda,y} U(\lambda;y,s) = L(\lambda;y,s) U(\lambda;y,s), \quad L(\lambda;y,s) = \begin{pmatrix}
\i s -\i\pa_y p(y,s) &\i\pa_y q(y,s)\\ -\i\pa_y q(y,s)& -\i s +\i\pa_y p(y,s) 
\end{pmatrix}.
\ee 
\end{proposition}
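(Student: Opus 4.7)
The plan is to mimic the proof of the $s$-Lax equation \eqref{eq:diff s eq U} in Section \ref{section:s}: I will define
\[
L(\lambda;y,s) := \bigl(D_{\lambda,y}\,U(\lambda;y,s)\bigr) U(\lambda;y,s)^{-1},
\]
show that $L$ extends to an entire function in $\lambda$, use the large-$\lambda$ expansion of $U$ to show $L$ is bounded at infinity, and finally apply Liouville's theorem to read off its form. The key motivating observation is that $D_{\lambda,y}$ has been tailored precisely to annihilate the weight: for $w(\lambda)=W(\lambda^2-y)$ we have $\partial_\lambda w=2\lambda W'(\lambda^2-y)$ and $\partial_y w=-W'(\lambda^2-y)$, hence
\[
D_{\lambda,y}\,w \;=\; \partial_\lambda w + 2\lambda\,\partial_y w \;=\; 0,
\]
so that $D_{\lambda,y}J_U\equiv 0$ on $\mathbb{R}$, since $J_U$ depends on $\lambda$ and $y$ only through $w$.

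Applying $D_{\lambda,y}$ to the jump relation $U_+=U_-J_U$ then gives $D_{\lambda,y}U_+=(D_{\lambda,y}U_-)J_U$, whence $L_+=L_-$ on $\mathbb{R}$. Since the Schwartz assumption on $W(\cdot^2-y)$ and the general Deift-Its-Zhou / IIKS theory recalled in Section \ref{section:RH} allow us to differentiate $U$ in $\lambda$ and $y$ with continuous boundary values, $L$ extends analytically across $\mathbb{R}$ to an entire function. For the behavior at infinity, I would write $U=P(\lambda)\,\e^{\i s\lambda\sigma_3}A_{\pm}$ with $P(\lambda)=I+U_1/\lambda+O(\lambda^{-2})$ and $A_{\pm}$ the constant matrices in \eqref{eq:asy U}; then, since $A_{\pm}$ is independent of $\lambda,y$ and $\e^{\i s\lambda\sigma_3}$ commutes with $\sigma_3$, a direct computation yields
\[
L \;=\; (D_{\lambda,y}P)\,P^{-1} + \i s\,P\,\sigma_3\,P^{-1}.
\]
Because $\partial_\lambda P=O(\lambda^{-2})$ and $2\lambda\,\partial_y P=2\partial_y U_1+O(\lambda^{-1})$, the right-hand side is bounded as $\lambda\to\infty$, so Liouville forces $L$ to be the constant matrix equal to its leading $\lambda^0$-coefficient, namely $L=2\partial_y U_1+\i s\,\sigma_3$ (up to a normalization convention for $p,q$). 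Substituting the parametrization \eqref{def:U12} of $U_1$ in terms of $p,q$ then produces the explicit $2\times 2$ matrix in the statement.

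The main obstacle is conceptual rather than computational: the whole proof hinges on spotting that the \emph{$\lambda$-dependent} derivation $D_{\lambda,y}=\partial_\lambda+2\lambda\partial_y$ is the right object to consider, because only this combination kills the $y$-dependence of $J_U$ arising through $w(\lambda)=W(\lambda^2-y)$. Once this is in place, the argument is a direct analogue of the Liouville computation used for the $s$-Lax equation. A secondary technicality is the legitimacy of differentiating the RH solution in $\lambda$ and $y$ and of differentiating its large-$\lambda$ asymptotic expansion term-by-term; both are justified by the standard small-norm / singular-integral-equation analysis of Section \ref{section:RH}, which is where the Schwartz hypothesis on $W(\cdot^2-y)$ is crucially used.
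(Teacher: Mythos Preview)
Your proposal is correct and follows essentially the same approach as the paper's own proof: define $L=(D_{\lambda,y}U)U^{-1}$, observe that $D_{\lambda,y}w=0$ so that $L$ has no jump and extends to an entire function, and then use the large-$\lambda$ expansion of $U$ together with Liouville's theorem to identify $L$ as the constant matrix $\i s\,\sigma_3+2\partial_y U_1$. Your write-up is in fact more explicit than the paper's on the asymptotic computation, and your caveat ``up to a normalization convention for $p,q$'' is appropriate.
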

\begin{proof}
Recall that $U$ is differentiable with respect to $y$, and that \eqref{eq:asy U} can be differentiated with respect to $y$.

Notice that, since $w(\lambda)=W(\lambda^2-y)$, we have
\be 
D_{\lambda,y} (w(\lambda)) =2\lambda W'(\lambda^2-y) - 2\lambda  W'(\lambda^2-y)=0.
\ee 
Thus, defining 
\be 
L(\lambda;y,s)= D_{\lambda,y} U(\lambda;y,s) U(\lambda;y,s)^{-1}, 
\ee 
we can conclude, from the first two properties of the RH problem for $U$ and the fact that $L$ has continuous boundary values,  that $L(\lambda;y,s)$ is an entire function of $\lambda$. Moreover, from the asymptotic expansion of $U(\lambda;y,s)$ as $\lambda\to\infty$ together with Liouville's theorem, we conclude that $L(\lambda;y,s)$  is independent of $\lambda$ and that it has precisely the form given in \eqref{eq: diff eq U Dlambdabeta}.
\end{proof}
The compatibility condition between equations \eqref{eq:diff s eq U} and \eqref{eq: diff eq U Dlambdabeta}, obtained by imposing that $\partial_s D_{\lambda,y} U=D_{\lambda,y}\partial_s U$, reads
\be \label{eq:compatibilityPDE}
\pa_s L-D_{\lambda,y}M+\left[L,M\right]=0,
\ee 
and
is equivalent to a coupled system of PDEs for $p(y,s) $ and $q(y,s)$, namely
\begin{align}
&\label{eq:PDEgamma1} \partial_y\partial_sp(y,s)=2 q(y,s) \partial_yq(y,s),\\
& \label{eq:PDEgamma2}\partial_y\partial_sq(y,s)+2q(y,s)(s -\partial_y p(y,s))=0.
\end{align}
The first identity reveals nothing new, since it is the $y$-derivative of \eqref{eq:idpsq}.
Dividing the second identity by $2q(y,s)$, taking another $s$-derivative, and using again \eqref{eq:idpsq}, we obtain a PDE for $q(y,s)$ only, which is precisely \eqref{PDE}.

\subsection{Local differential identity}
The second logarithmic $s$-derivative of $Q_W\left(y,s\right)$ can also be expressed in terms of the solution $q(y,s)$ of the PDE \eqref{PDE}.

\begin{proposition}\label{prop:eqQgamma}
Let $W:\mathbb R\to\mathbb C$ be such that $W(.^2-y)$ is a Schwartz function for every $y\in\mathbb R$. Let $s>0$, $y\in\mathbb R$ be such that $Q_W\left(y,s\right)\neq 0$, and let $q(y,s)$ be as in \eqref{def:U12}.
Then, we have
	\be  \label{eq:DDq}
	\pa_s^2 \log Q_W\left(y,s\right) = -q(y,s)^2,
	\ee
	and $q$ solves the PDE \eqref{PDE}.
\end{proposition}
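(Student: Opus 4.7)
The proposition contains two assertions: the identity $\partial_s^2 \log Q_W = -q^2$ and the PDE \eqref{PDE}. I plan to handle them in the opposite order from how they are stated, since the PDE is an easy consequence of the Lax-pair compatibility already established above, while the second-derivative identity requires combining the first-derivative formula of Proposition \ref{prop:detU}(ii) with the $s$-Zakharov-Shabat evolution.

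\textbf{The PDE.} This is essentially immediate from the work of Section \ref{section:ys}. The coupled system \eqref{eq:PDEgamma1}--\eqref{eq:PDEgamma2} is the componentwise compatibility relation \eqref{eq:compatibilityPDE} of the Lax pair. Dividing \eqref{eq:PDEgamma2} by $2q$ (on the locus where $q\neq 0$) yields
\[
\frac{\partial_s\partial_y q}{2q} = -s + \partial_y p;
\]
applying $\partial_s$ and using \eqref{eq:idpsq} to replace $\partial_s p$ by $q^2$ produces
\[
\partial_s\!\left(\frac{\partial_s\partial_y q}{2q}\right) = -1 + \partial_y(q^2) = \partial_y(q^2) - 1,
\]
which is precisely the PDE \eqref{PDE}.

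\textbf{The second-derivative identity.} Starting from $\partial_s\log Q_W = \frac{1}{2\pi\i s}\int_\R(\psi\partial_\lambda\phi - \phi\partial_\lambda\psi)\,w'(\lambda)\,\lambda\,d\lambda$, which is Proposition \ref{prop:detU}(ii) rewritten using $[U_+^{-1}\partial_\lambda U_+]_{2,1}=\psi\partial_\lambda\phi-\phi\partial_\lambda\psi$, I differentiate once more in $s$. Plugging the Zakharov-Shabat equations \eqref{eq:ZS1}--\eqref{eq:ZS2} into $\partial_s(\psi\partial_\lambda\phi-\phi\partial_\lambda\psi)$ produces a collection of terms in which the $\lambda$-derivative pieces involving $\beta\psi$ and $\gamma\phi$ cancel cleanly, leaving the tidy identity $\partial_s(\psi\partial_\lambda\phi-\phi\partial_\lambda\psi)=2\i\phi\psi$. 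This recovers the integral expression $\partial_s(s\partial_s\log Q_W)=\frac{1}{\pi}\int_\R \phi(\lambda)\phi(-\lambda)w'(\lambda)\lambda\,d\lambda$ already established in \eqref{eq:Fw second log der even}. To collapse this into $-q^2$, I exploit the identity \eqref{eq:idpsq}, namely $\partial_s p=q^2$: the claim $\partial_s^2\log Q_W=-q^2$ is equivalent to $\partial_s\log Q_W+p$ being independent of $s$. I would establish this by combining the trace formula \eqref{eq: int repres gamma sym} for $\gamma$ (hence for $q$ via \eqref{def:U12}) with the first-derivative formula above, and pinning down the $s$-independent function of $y$ using the small-$s$ asymptotics: Proposition \ref{prop:sto0} gives $\partial_s\log Q_W\to -\frac{2}{\pi}\int_0^\infty W(u^2-y)\,du$, while the explicit small-$s$ solution of the RH problem used in the proof of Proposition \ref{prop:RHsto0} (with $R\to I$ as $s\to 0$) yields $p(y,0)=\frac{1}{\pi}\int_0^\infty W(u^2-y)\,du$.

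\textbf{Main obstacle.} The central technical difficulty is the algebraic reduction of the integral expression for $\partial_s^2\log Q_W$ to the closed form $-q^2$. This requires the careful interplay of several trace formulas from Proposition \ref{prop:integral repre gamma}, integration against $\lambda$ combined with the $\lambda\mapsto-\lambda$ symmetry available for even $w$, and the $s$-evolution equations; the small-$s$ matching is needed to ensure that no spurious function of $y$ appears.
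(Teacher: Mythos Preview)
Your treatment of the PDE is fine and matches the paper: it is exactly the compatibility relation \eqref{eq:compatibilityPDE} rewritten after dividing by $2q$ and differentiating in $s$.

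The gap is in the second-derivative identity. You correctly recover the integral formula
\[
\partial_s\bigl(s\,\partial_s\log Q_W\bigr)=\frac{1}{\pi}\int_\R \phi(\lambda)\phi(-\lambda)\,w'(\lambda)\,\lambda\,\d\lambda,
\]
but this is already \eqref{eq:Fw second log der even}; the real work is to \emph{evaluate} this integral in closed form. The paper does this by the same jump/residue mechanism as in Proposition~\ref{prop:integral repre gamma}, but now applied to $\lambda\,\partial_\lambda U\,U^{-1}$ rather than to $\partial_\lambda U\,U^{-1}$: the extra factor of $\lambda$ pushes the relevant residue at infinity one order deeper into the large-$\lambda$ expansion, so that the second-order coefficient $\widetilde\alpha$ (and hence the determinant identity \eqref{eq:alphatilde}) enters. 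This yields exactly
\[
\frac{1}{\pi}\int_\R \phi(\lambda)\phi(-\lambda)\,w'(\lambda)\,\lambda\,\d\lambda=-p-sq^2=-\partial_s(sp),
\]
after which one integrates in $s$ and uses the \emph{weak} small-$s$ facts $s\,\partial_s\log Q_W\to 0$ and $sp\to 0$ (Propositions~\ref{prop:sto0} and~\ref{prop:RHsto0}) to kill the integration constant, obtaining $\partial_s\log Q_W=-p$ and hence $\partial_s^2\log Q_W=-q^2$.

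Your proposal does not supply any mechanism that plays the role of this residue computation. ``Combining the trace formula \eqref{eq: int repres gamma sym} for $\gamma$ with the first-derivative formula'' does not produce the needed identity: the trace formula gives $\int\phi^2 w'\,\d\lambda$, not $\int\phi(\lambda)\phi(-\lambda)\,\lambda\,w'\,\d\lambda$, and there is no evident algebraic route from one to the other using only the Zakharov--Shabat equations and the $\lambda\mapsto-\lambda$ symmetry. Your ``Main obstacle'' paragraph in fact concedes this. Note also that the small-$s$ matching you propose is logically downstream of the missing step: you can only ``pin down the $s$-independent function of $y$'' \emph{after} you have shown that $\partial_s\log Q_W+p$ is independent of $s$, and that constancy is precisely what the residue computation delivers. (Separately, Proposition~\ref{prop:sto0} gives $\lim_{s\to 0}s^{-1}\sigma_W$ and $\lim_{s\to 0}s\,\partial_s\sigma_W$, not the pointwise limit of $\partial_s\sigma_W$ that you invoke, so even your matching argument would need additional justification.)
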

\begin{proof}
We adapt formula \eqref{eq:Fw second log der even} to the case of $Q_W(y,s)=F_{w}(s)$, so that
	\be
	\pa_s s\pa_s\log Q_W(y,s) = \frac{1}{\pi} \int_\R \lambda w'(\lambda) \phi (\lambda;y,s)\phi(-\lambda;y,s) \d \lambda.
	\ee
We can now repeat the same procedure used in the proof of Proposition \ref{prop:integral repre gamma} to see that 
\be
\Delta\left(\lambda \pa_\lambda UU^{-1}\right)= 
\begin{pmatrix}	-\lambda\phi (\lambda;y,s)\phi(-\lambda;y,s) w'(\lambda)&\lambda\phi^2(\lambda;y,s)w'(\lambda)\\
	-\lambda\phi^2(-\lambda;y,s)w'(\lambda)&\lambda\phi (\lambda;y,s)\phi(-\lambda;y,s) w'(\lambda)
\end{pmatrix}.
\ee 
We consider the $(1,1)$ entry and we take the integral over $\R$:
\be 
-\int_\R	\lambda\phi (\lambda;y,s)\phi(-\lambda;y,s) w'(\lambda)\d\lambda = \int_\R \left[ \Delta\left(\lambda \pa_\lambda UU^{-1}\right)\right]_{1,1}\d\lambda.
\ee 
As we did in the proof of Proposition \ref{prop:integral repre gamma}, we can now  compute the integral in the right hand side by residue computation: expressing \eqref{eq:alphatilde} in terms of $p(y,s)$ and $q(y,s)$ and using \eqref{eq:idpsq}, we obtain
\begin{align}\nonumber
-\frac{1}{\pi}\int_\R \left[ \Delta\left(\lambda \pa_\lambda UU^{-1}\right)\right]_{1,1}\d\lambda& = -2\i\mathrm{res}_{\lambda=\infty} \left[ \left(\lambda \pa_\lambda UU^{-1}\right)\right]_{1,1}=-p(y,s)-sq(y,s)^2
\\
&=-p(y,s)-s\partial_sp(y,s).
\end{align}
We conclude that 
\be
\pa_s s\pa_s\log Q_W(y,s)= -\partial_s(sp(y,s)).
\ee 
We can now integrate this equation in $s$, between $0$ and $s$. In Proposition \ref{prop:sto0} and Proposition \ref{prop:RHsto0}, we showed respectively that $s\partial_s\log Q_W(y,s)$ and $sp(y,s)$ tend to zero as $s\rightarrow 0$. Hence the integration constant can be set to zero, and we obtain
\be\label{id:alphaQ}
\pa_s\log Q_W(y,s) =-p(y,s).
\ee
By differentiating in $s$ and replacing again equation \eqref{eq:sidentities}, we conclude that
\be 
\pa_s^2\log Q_W(y,s) = -q^2(y,s),
\ee
and obtain \eqref{eq:DDq}. Since we already showed that $q$ solves \eqref{PDE}, the result is proved. 
\end{proof}

\subsection{Sigma-form of the PDE}
{We are now going to derive the PDE written in \eqref{PDEsigma} for $\sigma_W(y,s)=\log Q_W(y,s)$. We call it a \textit{sigma-form} referring to the fact that in the classical case (when $w=\chi_{(-1/2,1/2)}$), the quantity $s\partial_s\log F(s;1)$ evolves according to the Painlevé V sigma-form, recall equations \eqref{eq:PVsigmaform} and \eqref{eq:nu and fred det zero t}.}
\begin{proposition}\label{prop:eqQsigma}
Let $y\in\mathbb R$ and let $W:\mathbb R\to\mathbb C$ be such that $W(.^2-y)$ is a Schwartz function. Let
$\sigma(y,s)=\sigma_W(y,s)=\log Q_W(y,s)$. Then, for any $s>0$ such that $Q_W\left(y,s\right)\neq 0$,
$\sigma$ solves the PDE \eqref{PDEsigma}.
\end{proposition}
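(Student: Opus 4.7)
My plan is to reduce the PDE \eqref{PDEsigma} to a single quadratic identity in the Lax-pair variables $p,q$ of \eqref{def:U12} and their $y$-derivatives, and then verify that identity by showing it is independent of $s$ and vanishes as $s\to 0$.

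\emph{Reduction.} From \eqref{id:alphaQ} and Proposition \ref{prop:eqQgamma} we have $\partial_s\sigma=-p$ and $\partial_s^2\sigma=-q^2$. Since $\sigma$ is smooth in both variables on the open set where $Q_W\neq 0$, we may interchange partial derivatives to get
\[
\partial_s\partial_y\sigma=-\partial_y p,\qquad \partial_s^2\partial_y\sigma=-2q\,\partial_y q.
\]
Squaring the second identity gives $(\partial_s^2\partial_y\sigma)^2=4q^2(\partial_y q)^2=-4\partial_s^2\sigma\cdot(\partial_y q)^2$, so after substitution the PDE \eqref{PDEsigma} reduces to the factored form
\[
4q^2\bigl[(\partial_y q)^2+2s\,\partial_y p+2\partial_y\sigma-(\partial_y p)^2\bigr]=0,
\]
which is implied by the single identity
\begin{equation*}
Z(y,s):=(\partial_y q)^2+2s\,\partial_y p-(\partial_y p)^2+2\partial_y\sigma=0.
\end{equation*}

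\emph{Verification of $Z\equiv 0$.} The key step is to differentiate $Z$ in $s$, using the Lax-pair compatibility equations \eqref{eq:PDEgamma1}--\eqref{eq:PDEgamma2}, i.e.\ $\partial_s\partial_y p=2q\,\partial_y q$ and $\partial_s\partial_y q=-2q(s-\partial_y p)$, together with $\partial_s\partial_y\sigma=-\partial_y p$. A direct expansion yields
\[
\partial_s Z=-4q(\partial_y q)(s-\partial_y p)+2\partial_y p+4sq\,\partial_y q-4q(\partial_y p)(\partial_y q)-2\partial_y p=0,
\]
so $Z$ is independent of $s$. The boundary value at $s\to 0$ is then read off from the preliminary results: Proposition \ref{prop:sto0} gives $\partial_y\sigma\to 0$, and Proposition \ref{prop:RHsto0} gives $\partial_y U_1\to 0$, hence $\partial_y p,\partial_y q\to 0$. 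Each of the four terms of $Z$ therefore tends to zero, so $Z\equiv 0$, which is the required identity, and \eqref{PDEsigma} follows.

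\emph{Main obstacle.} The conceptual step is isolating the correct conserved combination $Z$: once one sees that the compatibility relations express $\partial_s\partial_y p$ and $\partial_s\partial_y q$ as multiples of $q$, the coefficients in $Z$ are essentially forced — the $(\partial_y q)^2$ term produces cross terms $q(\partial_y q)(\partial_y p)$ that must cancel against those coming from $(\partial_y p)^2$ and $2s\partial_y p$, and the remaining $\partial_y p$ contributions are absorbed by $2\partial_y\sigma$ via $\partial_s\partial_y\sigma=-\partial_y p$. After $Z$ is guessed, both the $s$-conservation and the small-$s$ limit are bookkeeping that rely only on material already developed in Sections \ref{section:Fredholm}--\ref{section:RH} and the first part of Section \ref{section:ys}.
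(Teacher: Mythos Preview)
Your proof is correct and follows essentially the same route as the paper's. The paper multiplies \eqref{eq:PDEgamma2} by $\partial_y q$, recognizes the result as a total $s$-derivative, integrates, and fixes the integration constant via the $s\to 0$ asymptotics from Propositions \ref{prop:sto0} and \ref{prop:RHsto0}; your conserved quantity $Z$ is exactly the paper's integrated identity $\frac{1}{2}(\partial_y q)^2 = s\partial_s\partial_y\sigma - \partial_y\sigma + \frac{1}{2}(\partial_s\partial_y\sigma)^2$ rewritten in the variables $p,q$, and your computation $\partial_s Z=0$ is the same calculation run in reverse.
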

\begin{proof}
Recall from Proposition \ref{prop:eqQgamma} and \eqref{id:alphaQ} that
\be
q^2(y,s)=-\partial_s^2\sigma(y,s)=\partial_s p(y,s).
\ee
We can thus re-write 
\eqref{eq:PDEgamma2} 
as
\be\label{eq:systempq}
\partial_{s}\partial_y q=-2sq+2q \partial_y p.
\ee
Multiplying with $\partial_yq$, we find
\[\frac{1}{2}\partial_s\left((\partial_y q)^2\right)=-s\partial_y(q^2)+\partial_yp\ \partial_y (q^2),\]
or
\[\frac{1}{2}\partial_s\left((\partial_y q)^2\right)=s\partial_s^2\partial_y\sigma+{\partial_{s}\partial_y\sigma\ \partial_s^2\partial_y\sigma}.\]
Integrating in $s$, we get
\[\frac{1}{2}(\partial_y q)^2=s\partial_{s}\partial_y\sigma - \partial_{y}\sigma +\frac{1}{2}\left(\partial_{s}\partial_y\sigma\right)^2+c,\]
and the integration constant $c$ is equal to $0$. Indeed it follows from Propositions \ref{prop:sto0} and  \ref{prop:RHsto0} that 
\[\lim_{s\to 0}\partial_y p(y,s)=\lim_{s\to 0}\partial_y q(y,s)=\lim_{s\to 0}\partial_y \sigma(y,s)=0,\]
hence by taking the limit $s\to 0$ on both sides we obtain $c=0$. 
We also have \[\frac{(\partial_y q)^2}{2}=\frac{\left(\partial_y(q^2)\right)^2}{8q^2}=-\frac{\left(\partial_s^2\partial_y\sigma\right)^2}{8\partial_{s}^2\sigma}.\]
Substituting this, we obtain
\[\frac{\left(\partial_s^2\partial_y\sigma\right)^2}{4\partial_{s}^2\sigma}=-2s\partial_{s}\partial_y\sigma + 2\partial_{y}\sigma -\left(\partial_{s}\partial_y\sigma\right)^2,\]
which we recognize as \eqref{PDEsigma}.
\end{proof}
Propositions \ref{prop:eqQgamma} and \ref{prop:eqQsigma} together yield Theorem \ref{thm:q}. Combining this with \eqref{asTrace}, we also obtain Theorem \ref{thm:assigma}.

\section{Proof of Theorem \ref{thm:scattering}}\label{section:as}

We are now ready to prove Theorem \ref{thm:scattering}. Let $f$ satisfy the assumptions from Theorem \ref{thm:scattering}, and define
\be \label{def:WF}W(r)=-2\int_0^{\infty}f'(-u^2-r)\d u.\ee
We then have
\be
W'(r)=2\int_0^{\infty}f''(-u^2-r)\d u.
\ee
Since $f$ is $C^\infty$ and $f$ as well as its derivatives decay fast at $-\infty$, it is easy to see that $W$ is infinitely many times differentiable, and 
that $\lim_{r\to +\infty}r^kW^{(\ell)}(r)=0$ for every positive integers $k,\ell$: indeed, we have
\[|W^{(\ell)}(r)|\leq 2\int_0^\infty |f^{(\ell+1)}(-u^2-r)|\d u=\int_r^\infty |f^{(\ell+1)}(-t)|\frac{\d t}{\sqrt{t-r}},\]
and by the fast decay of $f^{(\ell+1)}$ at $-\infty$, the right hand side decays fast as $r\to +\infty$.

 As a consequence, $w(u)=W(u^2-y)$ is a Schwartz function, and it is an admissible choice to apply Theorem \ref{thm:assigma}.
Hence,
we have
\[\lim_{s\to 0}\frac{1}{s}\sigma_W(y,s)=-\frac{2}{\pi}\int_0^{+\infty}W(\lambda^2-y)\d\lambda.\]
Now we substitute \eqref{def:WF}, and obtain
\[\lim_{s\to 0}\frac{1}{s}\sigma_W(y,s)=\frac{4}{\pi}\int_0^{\infty}\left(\int_0^{+\infty}f'(-u^2-\lambda^2+y)\d u\right)\d\lambda.\]
Changing to polar coordinates, we get
\[\lim_{s\to 0}\frac{1}{s}\sigma_W(y,s)=2\int_0^{\infty} f'(-\rho^2+y)\rho \d \rho =\int_0^{\infty}f'(-v+y)\d v=f(y),\]
since $f(-\infty)=0$.

\section*{Acknowledgments} 
The authors were supported by Fonds de la Recherche Scientifique-FNRS under EOS project O013018F, and by CNRS International Research Network PIICQ. TC was also supported by FNRS Research Project T.0028.23. The authors are grateful to Thomas Bothner for drawing their attention to the PDE \eqref{PDEsigma} in \cite{IIKS}, and to Alexander Its, Peter Miller and Gregory Schehr for useful comments. 

{
	}
\end{document}